\newif\ifarxiv
\newcommand{\pref}{\succ}
\newcommand{\calR}{\mathcal{R}}
\newcommand{\probDef}[3]{\begin{quote}\textsc{\textsc{#1}}\\ \textbf{Input:} #2\\ \textbf{Question:} #3\end{quote}}
\newcommand{\N}{\mathbb{N}}
\newcommand{\Z}{\mathbb{Z}}
\newcommand{\shortcite}{\cite}
\newtheorem{theorem}{Theorem}
\newtheorem{corollary}{Corollary}
\newtheorem{proposition}{Proposition}
\newtheorem{example}{Example}
\theoremstyle{definition}
\newtheorem{definition}{Definition}
\newtheorem{remark}{Remark}
\newcommand{\np}{{\mathrm{NP}}}
\journal{Journal of Artificial Intelligence Research}
\begin{document}

\ifarxiv
\title{Opinion Diffusion and Campaigning on \\ Society Graphs}
\author{
Piotr Faliszewski \\
\small{AGH University} \\
\small{\url{faliszew@agh.edu.pl}}
\and
Rica Gonen \\
\small{The Open University of Israel} \\
\small{\url{ricagonen@gmail.com}}
\and
Martin Koutecký \\
\small{Charles University} \\
\small{\url{koutecky@kam.mff.cuni.cz}}
\and
Nimrod Talmon \\
\small{Ben-Gurion University} \\ \small{\url{talmonn@bgu.ac.il}}
}

\date{}
\maketitle 
\else
\begin{frontmatter}

\title{Opinion Diffusion and Campaigning on Society Graphs\footnote{%
  A preliminary version of this paper was presented at the 27th International Joint Conference on Artificial Intelligence (IJCAI '18)~\cite{preliminary}. This full version contains extended discussions, more elaborate examples, formal treatment of certain extensions, a somewhat stronger result regarding NP-hardness of the problem (including fixing a bug from the preliminary version), and reports on simulations.}}

\author{Piotr Faliszewski}
\address{AGH University \\ \url{faliszew@agh.edu.pl}}

\author{Rica Gonen}
\address{The Open University of Israel \\
\url{ricagonen@gmail.com}}

\author{Martin Koutecký}
\address{Charles University \\
\url{koutecky@kam.mff.cuni.cz}}

\author{Nimrod Talmon}
\address{Ben-Gurion University \\ \url{talmonn@bgu.ac.il}}
\fi

\begin{abstract}
We study the effects of campaigning,
where the society is partitioned into voter clusters
and a diffusion process propagates opinions in a network connecting the clusters.
Our model is very powerful and can incorporate many campaigning actions,
various partitions of the society into clusters,
and very general diffusion processes.
Perhaps surprisingly, we show that computing the cheapest campaign for rigging a given election can usually be done efficiently,
even with arbitrarily-many voters.
Moreover, we report on certain computational simulations. 
\end{abstract}

\ifarxiv
\else
\begin{keyword}
Social Choice \sep Social Networks \sep Integer Linear Programming \sep Opinion Diffusion \sep Election Manipulation
\end{keyword}
\fi

\ifarxiv
\else
\end{frontmatter}
\fi


\section{Introduction}

The introduction of online social networks to modern politics has
thoroughly changed how political campaigns are run, as currently it is
practical to influence selected individuals or groups of individuals
on a scale not possible before.  Political campaigns now routinely use
these networks to attempt to sway elections in their favor, for
instance, by targeting segments of voters with fake
news~\cite{fakenews,fakenews2}, by organizing fund-raising activities,
and by running vote suppression campaigns~\cite{politics-socialmedia}.
Indeed, the use of social media in election campaigns is so ubiquitous
that there already are hundreds of studies regarding their use (as a
piece of evidence in this respect, we point the reader to Jungherr's
overview of over a hundred Twitter-focused studies~\cite{twitter}).
To be efficient, campaigners would like to factor-in the nuances of
how each voter behaves and how beliefs diffuse in the underlying
social graph.  Yet, doing so is challenging for at least two reasons:
On the one hand, it is difficult---albeit not impossible---to obtain
such a fine-grained understanding of the social-media users and to
prepare the right content for them. On the other hand, as shown by
Wilder and Vorobeychik~\cite{wilder2017controlling} and by Bredereck
and Elkind~\cite{bredereck2017manipulating}, finding optimal
strategies to affect the election results is computationally
challenging even for the most basic voting rules.  To circumvent their
intractability results, Wilder and
Vorobeychik~\cite{wilder2017controlling} designed appropriate
approximation algorithms, whereas Bredereck and
Elkind~\cite{bredereck2017manipulating} considered very restricted
types of social networks. Our goal also is to address the
computational difficulty of manipulating elections through targeting
particular groups of voters, but we take a very different approach.
First, instead of considering networks of individuals, we assume that
the network is over clusters of like-minded voters. Second, we provide
fixed-parameter tractable (FPT) algorithms parameterized by the number
of candidates and the number of these clusters. As a consequence, the running times of our
algorithms scale smoothly with the precision to which
we model the social network.


Specifically, in our model, an external agent with limited funds
observes an election and wants to ensure that a certain candidate
wins. To this end, he or she can alter the preferences of some voters, e.g., by bribing them or by convincing them through some sort
of a targeted campaign (we assume the ordinal election model, where
each voter ranks the candidates from the most to the least appealing
one, and we model campaigning actions---at least in our basic model---as shift
bribery~\cite{elkind2009swap,elkind2010shift-bribery,briberysurvey}). Then,
voters' opinions diffuse through a social network and, eventually, the
election result is established.

The crucial part of our model is that
the social network is not over individual voters, but over clusters of
voters. Each cluster might correspond to a group of voters who share the
same preferences and, possibly, some other features. For example, for
each given preference order we may have three clusters, containing the
voters who share this preference order and are, respectively, young,
middle-aged, or senior. The clusters are connected in the network and
only voters in connected clusters may influence each other (the edges
in our network correspond to the fact that voters from various
clusters interact with each other; e.g., like-minded voters of the
same age may visit the same blogs and read each other's opinions
there). We refer to the networks of our type as \emph{society graphs} (we follow the model of Knop et al.~\shortcite{bestpaperever}).
We consider a diffusion model where voters change
their mind based on the most popular preferences of the voters to
which they are connected, as well as its various generalizations
(indeed, our FPT algorithms can work with a very broad family of
diffusion models that can be expressed via linear programs of a
certain kind).  

Our main contributions are as follows:
\begin{enumerate}
\item We provide the model of elections over society graphs,
  parametrized by both the voting rule and the diffusion
  process. Throughout most of the paper we focus on a simple variant
  of the our model, but we also discuss a very general
  notion of ILP-expressible diffusion processes.

\item We provide an FPT algorithm for our problem parametrized by
  the number of candidates. We present the algorithm for one of the
  simplest variants of our model, with the basic diffusion
  process and voters clustered by their preference order only, but we
  also argue how it can be extended in numerous ways (we focus on the
  simplest setting mostly for the sake of clarity).  In particular, as
  the algorithm is based on expressing the problem via integer linear
  programming, it works for all ILP-expressible voting rules and all
  ILP-expressible diffusion processes.  Our algorithm runs in
  exponential time with respect to the number of candidates, but in
  logarithmic time with respect to the number of voters, so it is 
  particularly well-suited for the case of political elections.

\item We test our algorithm experimentally. It turns out that for
  modern ILP solvers it is still quite challenging
  (although usable for at least some realistic settings). Thus we provide
  two heuristics algorithms that are much faster, but whose outputs are
  not guaranteed to be optimal.
\end{enumerate}

We note that our model and our society graphs in particular form a
generalization of the standard model, where each voter is a node in
the social network. Indeed, we can simply assume that each cluster
contains a single individual. The advantage of our approach is that we
can seamlessly move between more and less fine-grained views of the
society and its interactions. As our model can capture many natural
social behaviors within the clusters, a plethora of bribery actions,
and various diffusion processes, we believe it to be quite powerful; indeed, we demonstrate its power by highlighting many modeling possibilities within it.

Yet, our model---or, more specifically, our approach---has one
drawback: Our diffusion models are completely deterministic in that
the effect of the diffusion is always exectly the same for given
initial conditions. If one prefers to have a model with some sort of
stochastic behavior,
then there is a natural work-around for this issue. For example,
Wilder and Vorobeychik~\cite{wilder2017controlling} argue that by
sampling several possible ``diffusion scenarios'' (they use the
Independent Cascade model, so in their case it means sampling which
edges of the network indeed propagate the influence) one can get a
very good approximation of the behavior of a stochastic diffusion
process. Since the same approach can be used in our case, we find it
sufficient to consider deterministic diffusion models
(and, indeed, deterministic models are commonly studied in
the literature~\cite{peleg-deterministic,contagion,brill2016pairwise,auletta2015minority}).

The paper is organized as follows. First, in
Section~\ref{section:related} we discuss related work regarding
algorithmic results in the areas of social networks and election
bribery. Then, we describe our model in
Section~\ref{section:preliminaries} and analyze the properties of our
basic diffusion process in Section~\ref{section:convergence}. In
Section~\ref{section:complexity}, we provide our main algorithmic
results, while in Section~\ref{section:model extensions} we discuss
possible extensions of the basic model and how our algorithm can be
adapted for them. We conclude by discussing experimental results in
Section~\ref{section:experiments} and by discussing possible future
work in Section~\ref{section:outlook}.


\section{Related Work}\label{section:related}

We study the possibility of manipulating election outcomes under the
assumption that the voters' views propagate (or, diffuse) throughout
an underlying social network. Below we present related work regarding
social networks, bribery in elections, and a few results related by technique.

\subsection{Diffusion in Social Networks}

The two papers most closely related to
our work, one due to Wilder and
Vorobeychik~\cite{wilder2017controlling} and one due to Bredereck and
Elkind~\cite{bredereck2017manipulating}, study very similar issues,
but differ in several important modeling choices.  Foremost, we
consider the model of society graphs, where the nodes of the social
network represent clusters of voters, whereas both Wilder and
Vorobeychik and Bredereck and Elkind consider the more typical model
where each vertex represents a single individual. Furthermore, both Wilder
and Vorobeychik and Bredereck and Elkind focus on the simple Plurality
voting rule, whereas we consider a wide spectrum of rules (indeed, all
rules that are ILP-expressible, in essence including all commonly studied rules).  Regarding the diffusion process, Wilder and
Vorobeychik consider the Independent Cascade model (ICM), and Bredereck
and Elkind consider the Linear Threshold model (LTM).

It is worth exploring the differences between ICM, LTM, and our diffusion model.
In ICM and LTM, the underlying idea is that a message is spreading from multiple seed nodes, and this message has a certain ``directionality''---its intent is to activate nodes (which can be understood as, e.g., convincing
them of some fact).
So, these two models can be seen as modeling an intentional act of  campaigning or influencing.
In contrast, our diffusion model seeks to describe what happens as a result of mutual influence (or peer pressure) between voters.
Indeed, in our model the voters observe the other ones---typically, those close to them in the network---and change their preferences accordingly. As a consequence, in our model a voter may change his preference order arbitrarily, if faced with appropriate pressure. In contrast, classic variants of ICM and LTM 
only capture binary influence---either a node is influenced or it is not (which is why influence maximization can be viewed in terms of manipulating a Plurality election over two candidates). Yet, we mention that recently 
Cor{\`o} et al.~\cite{coro2019exploiting} proposed a variant of LTM, called Linear Threshold Ranking, LTR, where the level of influence is quantified in a more fine-grained way (with stronger infuence, a designated candidate may be shifted up by more positions; interestingly, this is quite related to shift
bribery, which we also consider).


Another difference between our diffusion model and ICM, LTM, and LTR is determinism: our model is deterministic while the other models are stochastic.
There are a few reasonable ways to turn our model into a stochastic one, and we believe that our algorithmic results may carry over using the 
trick of considering several ``diffusion scenarios'' (in the fashion of
Wilder and Vorobeychik), but such discussion is beyond the scope of this paper.
For a general, broad
discussion of diffusion processes, we point the reader to the book of
Shakarian et al.~\shortcite{shakarian2015diffusion}).

The main technical difference between our work and the papers of Wilder and Vorobeychik, and Bredereck and Elkind, is in
how we deal with computational intractability: Wilder and Vorobeychik
provide approximation algorithms (but also MILP formulations),
Bredereck and Elkind consider very restricted classes of social
networks, and we design fixed-parameter tractable algorithms,
parametrized by the number of candidates and the number of voter
clusters. Our main algorithm proceeds by forming and solving an ILP
instance and, in this sense, it is similar to the MILP approach of
Wilder and Vorobeychik. Our ILP, however is very different and, in particular, incorporates very different tricks.

Another paper that is very closely related to our work is that of
Silva~\cite{silva2016opinion}, where the author studies a similar
bribery model, but for cardinal preferences expressed as numbers
between $0$ and $1$ (e.g., corresponding to the level of support for
the current government). In his model, these values can be modified
through bribery actions, after which a certain dynamic process (i.e.,
a form of diffusion) propagates them over the network. Silva focuses
on a Twitter-like network, where edges are directed and influence can
only go in one way, whereas we---as well as Wilder and
Vorobeychik~\cite{wilder2017controlling} and Elkind and
Bredereck~\cite{bredereck2017manipulating}---consider more
Facebook-like networks, where the interactions are two-way.
(Note that we nevertheless later show how to also handle directed edges in our model; see Section~\ref{subsec:extensions}.)
Similarly to us, Silva uses integer linear programming to compute the solutions
to his problems.

More broadly, our work is closely related to the stream of papers on
the interface between social choice and social networks, which
includes the problems of recovering ground
truth~\cite{conitzer-network-mle,procaccia2015ranked}, the problems of
iterative voting~\cite{manipulation-social-network,sina2015adapting}, various issues
related to liquid democracy~\cite{liquid-demorcacy}, certain forms of
multiwinner voting~\cite{talmon2017structured}, and many others (for a
more detailed discussion, see the overview of Grandi et
al.~\cite{chaptergrandi}).
In particular, our work is quite closely related to that of Brill et
al.~\cite{brill2016pairwise}, who study the diffusion of ordinal
preferences through a social network. In their network,
each vertex (i.e., each individual) swaps two candidates if the
majority of its neighbors ranks them differently than him. This is
quite similar to our basic model, where connected clusters of voters
have identical preference orders, up to a single swap.  The two main
differences are that, on the one hand, our network is more restrcited
and, on the other hand, we allow for the initial modification of some
of the preference orders, whereas Brill et
al.~\cite{brill2016pairwise} focus only on the diffusion.
Similarly, our work is related to that of Botan et
al.~\shortcite{botan2017propositionwise}, where the authors consider a
diffusion process of preferences expressed as Boolean propositions.
In a similar vein, Christoff and Grossi~\cite{christoff2017stability}
study the convergence of binary opinions over a network.

Our work is also naturally connected to the stream of work on
influence maximization in social networks (see, e.g., the papers of
Kempe et al.~\cite{kdd03,kempe05influence} and Chen et
al.~\cite{chen2009efficient} as well as many follow-up ones). In these
works, the goal is to choose a set of nodes in a social network so
that if we pass some information to them and wait for the diffusion
process to converge, then as many nodes as possible will have received
our information (the work of Bredereck and
Elkind~\cite{bredereck2017manipulating} can be understood in these
terms as well; alternatively, influence maximization can also be seen
in terms of manipulating Plurality voting).

\subsection{Bribery in Elections}

Stepping away from social networks, our work belongs to the broad
stream of papers on the complexity of manipulating elections.  For a
general overview of this topic, we point the reader to the surveys of
Conitzer and Walsh~\cite{manipsurvey} and Faliszewski and
Rothe~\cite{briberysurvey}; here we will discuss the few most related
papers.

We model campaigning actions via the shift bribery problem, which
itself is a special case of the swap bribery problem; both introduced
by Elkind et al.~\cite{elkind2009swap,elkind2010shift-bribery}.
Briefly put, in the shift bribery problem we are given an election,
where each voter ranks all the candidates from the most to the least
appealing one, and, upon paying a required price, we can ask some of
the voters to shift a certain candidate $p$ higher. Our goal is to
ensure that $p$ wins, but without exceeding a given budget. Swap
bribery generalizes shift bribery by allowing swaps of any adjacent
candidates, and not just $p$, with those who preceed him or her.
Both shift bribery and swap bribery are $\np$-hard for many natural
voting rules (including, e.g., Borda, Copeland, Maximin, and various
elimination-based rules~\cite{maushagen2018iterative-shift-bribery}),
but shift bribery is generally easier to deal with. For example,
Elkind et al.~\cite{elkind2009swap,elkind2010shift-bribery} provided
approximation algorithms for shift bribery under several voting rules
(recently strengthened by Faliszewski et
al.~\cite{shiftbribery-ptas}), Bredereck et
al.~\cite{bredereck2016complexity} and Zhou and
Guo~\cite{zho-guo:c:param-iter-shift-bribery} provided several FPT
algorithms, and Elkind et al.~\cite{elkind2020spsc-shift-bribery} gave
polynomial-time algorithms for several structured preference
domains. For the case of swap bribery, Dorn and
Schlotter~\cite{DornS:2012} provided a careful analysis for the case
of approval voting, whereas Knop et al.~\cite{knop-fpt-bribery} gave a
general FPT algorithm parameterized by the number of candidates.
Both problems were also studied in the destructive setting, where the
goal is to prevent a given candidate from being a
winner~\cite{kacz-dsb,shir-dest-swap}. Interestingly, in this case the
problem is often efficiently solvable. Nonetheless, we do not consider
the destructive setting in our work (we do not expect the tractability
results to carry over to our model due to the computational
complexity implied by the diffusion process).

We stress that our choice of shift bribery as a model for campaigning
bears only limited significance regarding the complexity of our
problem. Indeed, we could have used full-fledged swap bribery or the
classic bribery problem of Faliszewski at
al.~\cite{faliszewski2009bribery} or any other problem from the
bribery family~\cite{briberysurvey} and---as long as it were
expressible within an ILP program---the complexity of our algorithm
would stay intact.

\subsection{Other Related Work}

Finally, we mention that the idea of using society graphs and clusters
of voters of a given type was heavily inspired by the work of Knop et
al.~\cite{bestpaperever}, who studied a very general form of
manipulating elections. In their work, an election is represented as a
society vector $(s_1,\ldots,s_n)$, where each $s_i$ specifies how many
voters of type~$i$ there are (e.g., how many voters have the $i$-th
possible preference order). The goal is to find a minimum-cost
transformation of a society vector to one satisfying a given
condition, provided a certain cost measure for transforming the
society. Our work extends the approach of Knop et
al.~\cite{bestpaperever} to include the social network in the
election.
The idea of types was also considered, e.g., by Izsak et
al.~\shortcite{interclass}, but for the case of candidates.

\section{Formal Model and Combinatorial Problem}\label{section:preliminaries}

In this section we present a very basic variant of our model, where
voter types correspond to preference orders, edges exist between two
orders that can be obtained by a single swap of adjacent candidates,
the diffusion process is done in a simple, particular way, and the
bribery actions are limited.  Later, in Section~\ref{section:model
  extensions}, we discuss various generalizations of our approach, by
considering arbitrary voter types, arbitrary bribery actions, and
generalized diffusion processes.  Yet, the basic model will allow us
to develop intuitions, prove strong hardness results, and present our tractability results clearly.  For $n \in \N$, by $[n]$ we mean
the set $\{1, \ldots, n\}$.

\subsection{Elections and Voting Rules}

We consider ordinal elections held with $n$ voters, expressing
preferences over $m$ candidates $C = \{c_1, \ldots, c_m\}$, where the
\emph{preference order} of a voter is a linear order over $C$.  A
voting rule~$\calR$ is a function taking an election as input and
returning a set of tied winners.  A candidate winning under $\calR$
for a given election is called an $\calR$-winner of the election.
As an example, under the \emph{Plurality} rule the candidates ranked
first most frequently win, and under the \emph{Borda} rule, for each
position $i$ each voter gives $m - i$ points to the candidate ranked
there; the candidates with the highest total number of points win.

\subsection{Voter Types, Societies, and Society Graphs}

For the time being, we let the preference order of
a voter be her
 \emph{type}. 
Thus, there are $\tau \leq m!$ types present in a given election and
we order them arbitrarily so that we can speak of ``the $j$-th type'' for a given $j \in [\tau]$.
By the \emph{weight} of voters with type $j$, denoted either $w_j$ or $w(j)$, as is more convenient, we mean the number of voters of type $j$.
%
Sometimes we represent an election as a vector $\mathbf{w} \in \mathbb{N}^\tau$,
whose $j$-th entry represents the weight of type $j$. We refer to such vectors
as \emph{societies}.
(in this we follow the model of Knop et al.~\shortcite{bestpaperever}).

As we are interested in diffusion processes operating on the voter types,
we associate a given election with a vertex-weighted graph $G = (V, \mathbf{w}, E)$,
termed the \emph{society graph}.
The society graph contains $\tau$ vertices, where $\tau$ is the number of types in the election
(specifically, $V = \{v_1, \ldots, v_\tau\}$,
where vertex $v_j$ corresponds to voter type~$j$,
and its weight $w_j$ is equal to the number of voters of that type in the given election).
There is an edge between vertices $v_j$ and $v_{j'}$ if the preference
orders corresponding to types $j$ and $j'$ differ by the ordering of a single pair of adjacent
candidates (in other words, if it is possible to transform one into the other with a single
swap of two consecutive candidates).
We show an example of a society graph in Figure~\ref{figure:society
  graph}.\footnote{Graphs of this form are quite popular in the study
  of permutations. Regarding their use in the context of elections, we
  point out, e.g., to the work of Puppe and
  Slinko~\cite{puppe2017condorcet-domains}. Recall that later we will also consider other graphs.}

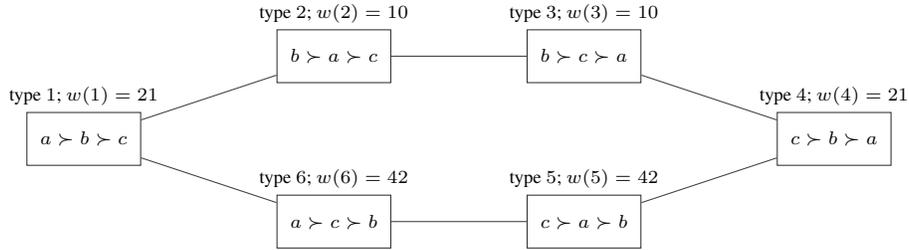
\begin{figure}[t]
    \scriptsize
    \center
    \begin{tikzpicture}[draw=black!75, xscale=1.33, yscale=2.2,-]
      \tikzstyle{original}=[rectangle,draw=black!80,minimum size=20pt,inner sep=5pt]

      \foreach [count=\i] \pos / \text / \above / \style in {
        {(0,0.5)}/$a \pref b \pref c$/type $1$; $w(1) = 21$/original,
        {(2.5,1)}/$b \pref a \pref c$/type $2$; $w(2) = 10$/original,
        {(5,1)}/$b \pref c \pref a$/type $3$; $w(3) = 10$/original,
        {(7.5,0.5)}/$c \pref b \pref a$/type $4$; $w(4) = 21$/original,
        {(5,0)}/$c \pref a \pref b$/type $5$; $w(5) = 42$/original,
        {(2.5,0)}/$a \pref c \pref b$/type $6$; $w(6) = 42$/original}
      {
        \node[\style,label=above:{\above}] (V\i) at \pos {\text};
      }
      
      \foreach \i / \j in {1/2,2/3,3/4,4/5,5/6,6/1} {
        \path[] (V\i) edge (V\j);
      }

    \end{tikzpicture}
    \caption{A society graph with three candidates and six types (corresponding to the six possible preference orders on those three candidates). In this graph there are, e.g., $42$ voters of type $6$, each with preference order $a \pref c \pref b$; this graph corresponds to a society $\mathbf{w} = [21, 10, 10, 21, 42, 42]$.}\label{figure:society graph}
  \end{figure}

\subsection{Diffusion of Preferences}

Given a society graph (which encodes a given election), we consider
two variants of the diffusion process, namely asynchronous
and synchronous.  In the asynchronous variant, in each step of the
process some vertex~$v$ of the society graph $G$ is picked and, then,
the following occurs (we do not specify which vertex is selected and,
as we will see in Example~\ref{example:convergence order} below,
different orders of selecting the vertices may lead to different
outcomes of the process).  We consider the closed neighborhood $N[v]$
of $v$ in $G$ and check whether there is a neighbor $x$ of $v$ for
which $w_x > \nicefrac{1}{2} \sum_{u \in N[v]} w_u$; that is, a
neighbor whose weight exceeds the sum of the weights of all other
vertices in the closed neighborhood of $v$.
If such a neighbor $x$ exists, then we add the current weight $w_v$ of
$v$ to that of $x$ and change the weight $w_v$ to be $0$. Intuitively,
the voters of type represented at $v$ look at all the voters with
similar or identical preferences and if there is a majority support
among these voters for some preference order, then they switch to it.
In the synchronous variant we proceed  in the same way, but simultanously for all vertices.
The diffusion process halts whenever it stabilizes.

\begin{example}\label{example:convergence order}
  Consider the society graph depicted in Figure~\ref{figure:society
    graph} and asynchronous diffusion. Assume that we first choose
  type $3$. As type $3$ has as neighbors types $2$ and $4$, together
  there are $41$ voters of these types, and $21$ of them have
  preference order $c \pref b \pref a$. So, the $10$ voters with type $3$
  move to have type~$4$. If we then select type~$4$, type~$6$, and
  then type~$2$, then the diffusion converges with $115$ voters of
  type~$5$ (with preference order $c \pref a \pref b$) and $31$ voters
  of type~$1$ (with preference order $a \pref b \pref c$); thus,
  Plurality selects $c$.  However, if we select first type~$2$, then
  $1$, then $5$, and then $3$, then we reach convergence with $115$
  voters of type $6$ (with preference order $a \pref c \pref b$)
  and $31$ voters of type~$5$ (with preference order
  $c \pref b \pref a$); thus, Plurality selects $a$.  This shows that the asynchronous diffusion process can lead to different outcomes, depending on the order in which vertices are considered.

  Let us now consider the same society graph and synchronous
  diffusion. After the first round, we have $10$ voters of type~$1$
  (voters of type~$2$ moved to have type~$1$, whereas original
  type~$1$ voters moved to have type~$6$), no voters of types~$2$
  and~$3$, $10$ voters of type~$4$, $63$ voters of type~$5$, and $63$
  voters of type~$6$. After the next round there are $73$ voters of
  type~$5$ and $73$ voters of type~$6$. No further changes are
  possible and the process converges; Plurality selects $a$ and $c$ as
  two tied winners.
\end{example}

\subsection{Bribery in Society Graphs}

Besides issues related to the diffusion of preferences,
we are mainly interested in understanding the possibility of
manipulating election outcomes.  Thus we assume that there is an
external briber who has some budget and, using this budget, can affect
the original preference orders of some voters (i.e., the preference
orders they have prior to the diffusion).
Specifically, in a single bribery action the briber chooses a single
voter and, at unit cost, shifts the briber's preferred candidate $p$
up by one position in this voter's preference order (in effect,
changing this voter's type; see the work of Elkind et
al.~\cite{elkind2009swap,elkind2010shift-bribery} and Bredereck et
al.~\cite{bredereck2016prices} for a detailed discussion of shift
bribery and its various cost models).
The briber performs as many bribery actions as he wants, up to the
budget limit, and then the diffusion process takes place.  The goal of
the briber is to have his preferred candidate $p$ win the resulting
election (under a given, predetermined voting rule).  Formally, we are
interested in the following general problem.


\probDef
  {$\calR$-Bribery in society graphs ($\calR$-BSG)}
  {A society graph $G$ (given directly as a graph), a preferred
    candidate $p$, and a budget $b$.}
  {Are there at most $b$ (unit-cost, shift-) bribery actions, such
    that after performing them on $G$ and then running the diffusion
    process, $p$ is an $\calR$-winner of the resulting election?}

  
Corresponding to the synchronous and asynchronous diffusion processes,
we consider both \emph{sync-$\calR$-BSG} and \emph{async-$\calR$-BSG}
problems.  For the asynchronous diffusion, we further consider the
\emph{optimistic} and \emph{pessimistic} variants of the problem. In
the former, we ask whether the briber's preferred candidate wins for
\emph{some} order of the diffusion steps. In the latter, we require that $p$
wins for \emph{every} order of diffusion steps that leads to convergence.



\begin{remark}
  The input to $\calR$-BSG is a labeled graph with weighted vertices,
  a preferred candidate $p$, and a budget $b$.  Thus the size of the
  input encoding is linear in the number of voter types and only
  logarithmic in the number of voters.
\end{remark}

\section{Convergence and Diffusion Order}\label{section:convergence}

Before we tackle the $\calR$-BSG problem, we first show that our
diffusion processes
always converge, but the complexity of deciding if a particular
candidate may become a winner due to the diffusion may be $\np$-hard.

For the synchronous case, convergence follows by arguing that in each
diffusion step at least one vertex loses its weight completely, and a
vertex of weight zero never increases its weight.  In consequence, we have that the
number of synchronous diffusion steps is bounded by the number of
voter types.
The asynchronous case is even simpler, but requires appropriate
terminology: If a diffusion step
does not change the society graph (e.g., due to the choice of the
vertex) then we call it \emph{redundant}.  We say that a sequence of
non-redundant diffusion steps is \emph{irredundant}.  A \emph{maximal
  irredundant sequence} is an irredundant sequence after executing
which all remaining steps are redundant.

\begin{proposition}\label{proposition:convergence}
  For each society graph $G$, the synchronous diffusion process
  converges in at most $\tau$ steps.  The asynchronous diffusion
  process converges if the sequence of diffusion steps contains a
  maximal irredundant sequence as a subsequence.  The length of a
  maximal irredundant sequence is bounded by~$\tau$.
\end{proposition}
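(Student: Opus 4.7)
My plan is to reduce all three assertions to a single monotone invariant: the number of zero-weight vertices is non-decreasing under every diffusion step and strictly increasing under every non-redundant one. Since there are only $\tau$ vertices, this immediately caps the number of non-redundant steps (synchronous or asynchronous) by $\tau$, yielding both the synchronous convergence bound and the bound on the length of a maximal irredundant sequence.

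The first ingredient I would establish is that a zero-weight vertex never receives any weight. A vertex $v$ gains weight only when it serves as the ``heavy neighbor'' of some processed vertex $u$, i.e., when $w_v > \tfrac{1}{2} \sum_{z \in N[u]} w_z$. Since all weights are nonnegative, the right-hand side is nonnegative and the inequality fails whenever $w_v = 0$; in the synchronous round the same reasoning applies simultaneously to every processed vertex.

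The main obstacle is the second ingredient: showing that every non-redundant step creates at least one new zero. The asynchronous case is immediate, because a non-redundant step at vertex $v$ requires $w_v > 0$ and the rule explicitly sets $w_v$ to $0$. For the synchronous case I would argue by contradiction. Let $P$ be the set of positive-weight donors in the round and, for $d \in P$, let $x(d)$ denote its heavy neighbor. If no member of $P$ were zeroed, each $d \in P$ must receive a positive incoming donation, which forces $P \subseteq x(P)$; by counting, $x$ then restricts to a permutation of $P$ and therefore decomposes into cycles. Along any such cycle $d_1 \to d_2 \to \cdots \to d_k \to d_1$, the heavy-neighbor inequality $w_{x(d)} > \tfrac{1}{2} \sum_{z \in N[d]} w_z \geq \tfrac{1}{2} (w_d + w_{x(d)})$ gives $w_{x(d)} > w_d$, and chaining this along the cycle yields the impossible $w_{d_1} < w_{d_2} < \cdots < w_{d_k} < w_{d_1}$.

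Assembling the pieces: the bound of $\tau$ on the number of non-redundant steps gives the synchronous convergence claim directly, and bounds the length of any irredundant sequence---and hence any maximal one---by $\tau$. For the asynchronous convergence claim, if the executed sequence contains a maximal irredundant sequence $S$ as a subsequence, then all interleaved additional steps are redundant and leave the state unchanged, so the final configuration coincides with that reached by $S$; by maximality this state admits no further non-redundant step, which is precisely convergence.
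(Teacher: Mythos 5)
Your proof is correct and follows essentially the same route as the paper: the invariant that zero-weight vertices never regain weight, plus the observation that the heavy-neighbor relation strictly increases weight, so each non-redundant step zeroes a vertex. The only cosmetic difference is in the synchronous case, where you rule out a permutation of positive donors via cycle decomposition, whereas the paper follows the chain of assimilations (with strictly decreasing weights) until it terminates at a vertex that is zeroed.
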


\begin{proof}
  We consider the asynchronous case first. Consider some diffusion
  step. At this point either no vertex changes any further, or at
  least one vertex, if chosen for the next diffusion step, would have
  its weight reduced to zero.  Since no weight-zero vertex can ever
  increase its weight (by the definition of the diffusion step), it
  follows that every irredundant sequence consists of at most $\tau$
  steps.  By definition, if a sequence contains a maximal irredundant
  subsequence, it produces the same graph as this subsequence.

  For the synchronous case, it suffices to show that after every
  diffusion step (prior to convergence), the number of vertices with
  non-zero weight decreases.  Consider a diffusion step before
  convergence.  There is some vertex $v$, which is to be assimilated
  into one of its neighbors, $u$.  If no other neighbor of $v$ is to
  be assimilated by $v$ in this step, then we are done: The number of
  vertices with non-zero weight will decrease by at least one after
  this diffusion step.  Perhaps, however, there is some neighbor $v'$
  of $v$ that is to be assimilated by $v$ in the current step. It must
  be that $v' \neq u$, as we require a strict majority for a vertex to
  be assimilated by one of its neighbors.
%
  If no neighbor of $v'$ is assimilated by $v'$, then we are done (by
  the same token as before, we see that the number of weight-zero
  vertices will increase).  Otherwise, there is some neighbor $v''$ of
  $v'$ which is to be assimilated by $v'$.  By following this logic
  exhaustively, either we reach a vertex whose weight is to decrease
  to zero, or some vertex repeats. However, the latter is impossible
  as, by definition of the diffusion process, the weights of the
  vertices that we encounter form a decreasing sequence. Thus the
  number of weight-zero vertices increases after each diffusion step
  and the claim follows.
%
%
\end{proof}





For synchronous diffusion, the final society graph is defined uniquely
and so is the outcome of the election (for a given voting rule).  This
is not the case for asynchronous diffusion. Indeed, in
Example~\ref{example:convergence order} we have seen that two
different diffusion orders may lead to two different society graphs.
In the next theorem we show a stronger statement, namely that the problem of
deciding whether a given candidate may be a Plurality winner after
asynchronous diffusion is $\np$-hard.\footnote{The proof of
  Theorem~\ref{theorem:diffusion matters} presented below is different
  than the one included in the conference version of this paper. The
  latter had a technical flaw due to which the society graph used in
  the construction was not implementable. The proof presented below is
  based on a different construction and fixes this issue.  As an added
  benefit, it uses weights whose values are polynomially bounded with
  respect to the size of the input instance (i.e., it shows that the
  problem is strongly NP-hard, whereas the previous proof was showing
  weak NP-hardness).}

\begin{theorem}\label{theorem:diffusion matters}
  Given a society graph $G$ and a preferred candidate $p$, deciding
  whether there is an order of asynchronous diffusion steps that
  results in $p$ being a Plurality winner in the converged election is
  $\np$-hard.
\end{theorem}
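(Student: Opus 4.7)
The plan is to reduce from a strongly NP-hard problem such as 3-SAT (or, say, Exact Cover by 3-Sets). The guiding intuition, already visible in Example~\ref{example:convergence order}, is that when a low-weight vertex is adjacent to two heavier neighbors, choosing which neighbor to process first acts as a binary decision. I would arrange the gadgetry so that these binary decisions correspond to truth assignments to the variables of the input formula, and so that $p$ can be pushed to a winning Plurality score if and only if the assignment satisfies every clause.

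For each variable $x_i$, I would build a \emph{variable gadget} containing two ``twin'' types that mutually threaten one another: whichever is selected for a diffusion step first is absorbed by the other, and the resulting concentrated weight is then routed (through one or two auxiliary vertices) to a literal vertex supporting either the ``true'' or ``false'' side. For each clause $C_j$, I would build a \emph{clause gadget} whose weight flows to a vertex supporting $p$ provided that at least one adjacent literal vertex has already accumulated extra weight (signalling that a satisfying literal has been selected), and otherwise flows to a vertex supporting a designated competitor $q$. Calibrating the base Plurality scores of $p$ and $q$ so that $p$ wins by a single vote precisely when every clause gadget is routed ``correctly'' then gives the desired equivalence. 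All weights can be chosen as fixed small constants, yielding polynomially-bounded weights and hence strong NP-hardness as claimed in the footnote.

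The main obstacle, and indeed the source of the bug in the preliminary version, is that not every weighted graph arises as a society graph: its vertices must be linear orders over a common candidate set $C$, and its edges must coincide \emph{exactly} with the pairs differing by one adjacent transposition. To navigate this, I would introduce a large pool of ``dummy'' candidates and assign each gadget its own disjoint block of positions inside the preference order, so that orders belonging to different gadgets differ in many, non-adjacent places and thus are guaranteed to be non-adjacent in the permutohedron. Within a single gadget, the handful of required edges would be realised by picking orders that differ by a single adjacent swap in that gadget's block, and I would then verify combinatorially that no unintended edges appear either inside or across gadgets. Because the top position of each order is used only for Plurality scoring, it can be chosen independently of the tail used to engineer adjacencies, giving enough freedom to support $p$, $q$, and the per-variable literal candidates simultaneously. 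I expect the bulk of the technical work to be this checking of adjacencies; everything else is essentially bookkeeping.
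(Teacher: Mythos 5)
Your high-level plan (reduce from an $\np$-hard problem via gadgets whose weight is routed by the choice of diffusion order, then fight to realize the gadget graph as an actual society graph) matches the spirit of the paper's proof, which reduces from \textsc{Cubic Vertex Cover}. But your central gadget is broken. You propose variable gadgets built from two ``twin'' types that \emph{mutually} threaten one another, so that whichever is selected first is absorbed by the other. This configuration cannot exist under the paper's diffusion rule: $v$ is assimilated into $x$ only if $w_x > \frac{1}{2}\sum_{u\in N[v]} w_u$, i.e., $x$ must strictly outweigh everything else in $v$'s closed neighborhood combined. If $u$ absorbs $v$ we get $w_u > w_v + (\text{other neighbors of } v)$, and if $v$ absorbs $u$ we get $w_v > w_u + (\text{other neighbors of } u)$; adding these yields $w_u + w_v > w_u + w_v$, a contradiction (this asymmetry is exactly what the paper exploits in the proof of Proposition~\ref{proposition:convergence}). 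The order-dependence you need must instead come from \emph{racing, interfering absorptions}: a light vertex $v^{(1)}$ sits between a heavy neighbor $v^{(2)}$ and some edge-gadget vertices; the choice is whether to absorb $v^{(1)}$ upward into $v^{(2)}$ (paying a ``cost'' that the construction charges to a bookkeeping candidate $d$, which caps the number of such selections at $k$) or to first absorb $v^{(2)}$ into a still heavier $v^{(3)}$, after which $v^{(1)}$ is locked in place and its weight continues to block the adjacent edge gadget. That absorb-or-lock race, not mutual threat, is the binary decision in the paper's reduction.

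Your treatment of realizability also contains a false step. You claim the top position of each order ``can be chosen independently of the tail used to engineer adjacencies.'' It cannot: two types are adjacent iff they differ by exactly one adjacent transposition, so adjacent types have the \emph{same} top candidate unless that single swap occurs at positions $1$--$2$. Consequently every edge along which the Plurality beneficiary changes must be a top-two swap, which heavily constrains gadget design; moreover your disjoint-blocks scheme would forbid the clause--literal edges you need (literals are shared among clauses, so their blocks cannot be disjoint from all clause blocks). The paper resolves both issues with a single device: every gadget order is a fixed permutation of $\{d,e,p\}$ followed by a ``signature'' block $a_1,b_1,\dots,a_n,b_n$ encoding a subset $S\subseteq V(G)$ via swaps of $a_i$ with $b_i$. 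Two gadget nodes are then adjacent exactly when either their signatures coincide and their $\{d,e,p\}$-prefixes differ by one adjacent swap (intra-gadget edges, including the needed top-two swaps), or their prefixes coincide and their signatures differ in one coordinate (the cross-gadget edge between an edge node with signature $\{v_i,v_j\}$ and a vertex node with signature $\{v_i\}$). Without a mechanism of this kind your adjacency verification would fail, which is precisely the bug the footnote attributes to the conference version.
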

\begin{proof}
  We provide a reduction from the $\np$-complete problem \textsc{Cubic Vertex
  Cover}~\cite{gar-joh-sto:j:simplified-graph-problems}. In this
  problem we are given a graph $G$, where each vertex has degree at
  most three, and an integer $k$ (sometimes authors assume that the degree of the vertices in the graph is exactly three, but it is not needed in our case). We ask whether it is possible to select
  at most $k$ vertices so that each edge touches at least one of the
  selected vertices.

  Consider an instance of \textsc{Cubic Vertex Cover} that consists of
  a graph $G$ and an integer $k$.
  We write $V(G) = \{v_1, \ldots, v_n\}$ to denote the set of
  $G$'s vertices and $E(G) = \{e_1, \ldots, e_m\}$ to denote the set
  of its edges. We build the following election (and the associated
  society graph). We let the candidate set be
  $C = \{c,d,e,p\} \cup A \cup B$, where $A = \{a_1, \ldots, a_n\}$
  and $B = \{b_1, \ldots, b_n\}$. The role of the candidates in sets
  $A$ and $B$ is to encode subsets of $V(G)$. We sometimes use the following
  convention to describe the nodes of the society graph: Given three
  candidates $x,y,z \in \{d,e,p\}$, some subset $S$ of $V(G)$, and an
  integer $w$, by $w / xyzS$ we mean a node of the society graph with
  weight $w$ and preference order obtained from:
  \[
    x \pref y \pref z \pref a_1 \pref b_1 \pref a_2 \pref b_2 \pref \cdots \pref a_n \pref b_n \pref c
  \]
  by swapping for each $v_i \in S$ candidate $a_i$ with candidate
  $b_i$ (intuitively, candidates from the set $A \cup B$ create a
  signature that puts our preference order at sufficient swap
  distance from other votes that rank candidates $x,y,z$ in the same
  way).

  To form the society graph, let $T$ be some large positive integer (to
  be specified later), let $X = 10$, let $Y = 100$, and let $Z = 1000$. 
  The society graph consists of two parts. The first part contains three isolated
  vertices:
  \begin{enumerate}
  \item A vertex with candidate $c$ ranked first and with weight $T$.
  \item A vertex with candidate $p$ ranked first and with weight $T - (2X+9)|E(G)|$.
  \item A vertex with candidate $d$ ranked first and with weight $T - (Y+Z)|V(G)|-kX$.
  \end{enumerate}
  By analyzing the second part of the society graph, it will become
  clear that it is easy to make sure that these vertices are indeed isolated (for the first one, it
  suffices to rank $c$ on the first place and to rank all the other
  candidates arbitrarily; for the latter two it suffices to rank the
  requested candidate on top, followed first by $c$ and then all the other
  candidates arbitrarily).

  The second part of the society graph encodes the graph $G$. We build
  it as follows:
  \begin{enumerate}
  \item For each vertex $v_i \in V(G)$, we form three nodes, $v^{(1)}_i$, $v^{(2)}_i$, and $v^{(3)}_i$,
    specified as follows (recall our convention for describing the
    nodes):
    \begin{align*}
      v^{(3)}_i \colon& Z / dpe\{v_i\}, &
      v^{(2)}_i \colon& Y / dep\{v_i\}, &
      v^{(1)}_i \colon& X / edp\{v_i\}.    
    \end{align*}
    Note that these vertices form a path
    $v^{(3)}_i$--$v^{(2)}_i$--$v^{(1)}_i$ and that there are no edges
    between society-graph nodes associated with different vertices
    $v_i, v_j \in V(G)$.

  \item For each edge $e_t = \{v_i, v_j\} \in E(G)$ we form three
    nodes:
    \begin{align*}
      e^{(1)}_t \colon& 1 / edp\{v_i,v_j\}, &
      e^{(2)}_t \colon& X+3 / epd\{v_i,v_j\}, &
      e^{(3)}_t \colon& X+5 / ped\{v_i, v_j\}.    
    \end{align*}
    These vertices form a path $e^{(1)}_t$--$e^{(2)}_t$--$e^{(3)}_t$
    and there are no edges between society-graph nodes associated with
    different edges from $E(G)$. However, there is an edge between
    $e^{(1)}_t$ and $v^{(1)}_i$, and between $e^{(1)}_t$ and
    $v^{(1)}_j$.
  \end{enumerate}
  The society graph does not contain any edges aside from those
  explicitly mentioned in the construction above.  We set $T$ to be the
  square of the sum of the weights of the nodes from the second part of
  the society graph (intuitively, $T$ is simply a large number).
  This completes the construction. See Figure~\ref{figure:npone} for an example with a small input graph.
  
  \begin{figure}
	\centering
	  \includegraphics[width=12.5cm]{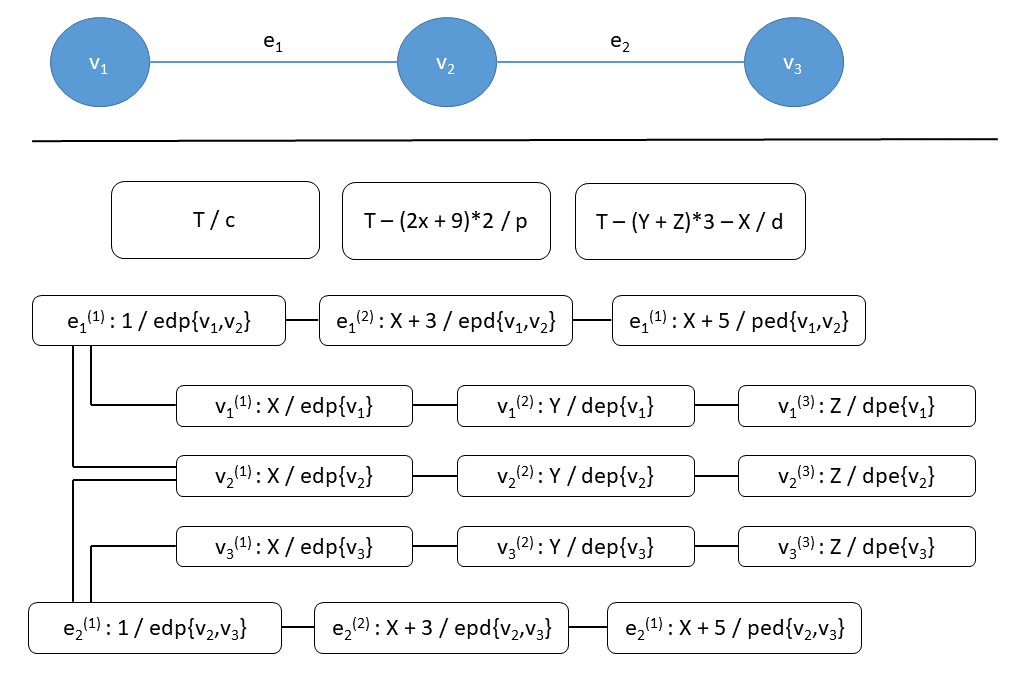} 
	\caption{An example for the reduction described in the proof of Theorem~\ref{theorem:diffusion matters}. On top, we show a graph containing $3$ vertices and $2$ edges. At the bottom, we show the constructed society graph.}
	\label{figure:npone}
\end{figure}
  
  Prior to the diffusion, the
  candidates from sets $A$ and $B$ have score $0$ (and their score
  never increases), whereas the remaining candidates have the
  following scores:
  \begin{enumerate}
  \item candidate $c$ has score $T$ (this is currently the highest
    score and it cannot change; any candidate that becomes a winner
    after the diffusion has to reach at least score $T$),
  \item candidate $d$ has score $T-kX$,
  \item candidate $p$ has score $T-(X+4)|E(G)|$, and
  \item candidate $e$ has score much below $T$ (and this score cannot
    reach $T$ even due to the diffusion).
  \end{enumerate}

  Let us now argue that if there is a set $S$ of at most $k$ vertices
  from $V(G)$ such that each edge in $E(G)$ touches at least one
  vertex from $S$, then there is a diffusion order that leads to $p$
  being a Plurality winner of the given election. Indeed, one such
  diffusion order proceeds as follows:
  \begin{enumerate}
  
  \item
  First, for each $v_i \in S$,
  society-graph node $v^{(1)}_i$ is assimilated into $v^{(2)}_i$.
  
  \item
  Then, for each $j \in [n]$, society-graph node $v^{(2)}_j$ is
  assimilated into $v^{(3)}_j$.
  Note that, after this happens, there are no more nodes
  of the form $v^{(\ell)}_i$ that can be assimilated into any other
  node.

  \item
  Next, for each $t \in [m]$, node $e^{(1)}_t$ is assimilated into
  $e^{(2)}_t$ (this is possible because, as $S$ is a vertex cover, now
  each $e^{(1)}_t$ is connected to at most one node of the form
  $v^{(1)}_i$, with weight $X$, and to exactly one node $e^{(2)}_t$,
  with weight $X+3$; as the total weight that $e^{(1)}_t$ sees is
  $2X+4$, its weight can move to $e^{(2)}_t$).
  
  \item
  Finally, for each
  $t \in [m]$, node $e^{(2)}$, currently with weight $X+4$, is
  assimilated into $e^{(3)}_t$, who (prior to this) has weight $X+5$.  
  Note that, after this happens,
  no further diffusion steps are possible. Moreover, $p$ and $c$ have
  score $T$, $d$ has score at most $T$, and $e$ has score below $T$.
  Thus $p$ is a winner.

\end{enumerate}

  Let us now consider the other direction, i.e., the case that $p$ can
  become a winner of our election for some diffusion order. For this
  to happen, $p$ has to reach score at least $T$ and this is possible
  only if for each $t \in [m]$, the weight from nodes $e^{(1)}_t$ and
  $e^{(2)}_t$ is assimilated into the node $e^{(3)}_t$, which ranks
  $p$ on top (note that no other nodes can transfer their weight to
  nodes that rank $p$ on top). However, for a given $t \in [m]$, the
  weight from $e^{(1)}_t$ can reach $e^{(3)}_t$ only by first being
  assimilated by $e^{(2)}_t$.  Yet, prior to the diffusion, $e^{(1)}$
  sees total weight equal to $3X+4$ and cannot be immediately
  assimilated by any of its neighbors ($e^{(1)}_t$ sees its own weight
  of $1$, weight $X+3$ of $e^{(2)}_t$ and weight $X+X$ of the two
  nodes $v^{(1)}_i$ and $v^{(1)}_j$, such that edge $e_t$ connects
  $v_i$ and $v_j$).  Thus, for each $e_t = \{v_i,v_j\} \in E(G)$, at
  least one of $v^{(1)}_i$, $v^{(1)}_j$ has to be assimilated into,
  respectively, $v^{(2)}_i$ or $v^{(2)}_j$. However, each such
  assimilation increases the score of $d$ by $X$. If this happens for
  more than $k$ nodes, then the score of $d$ exceeds $T$ and $p$ does
  not become a winner. In other words, if $p$ becomes a winner, then
  there is a set of $k$ vertices in $V(G)$ that touch all of the edges
  from $E(G)$. This means that there is a vertex cover of size $k$ in
  our input graph.
\end{proof}

We have phrased Theorem~\ref{theorem:diffusion matters} to speak of
the Plurality rule as it is the simplest, yet most widely used voting rule. Nonetheless, similar results hold for many other
voting rules. Indeed, it would be rather remarkable if there were a
natural voting rule for which it did not hold.

\section{Complexity of Manipulating Society Graphs}\label{section:complexity}

In this section we present our main theoretical results. Briefly put,
$\calR$-BSG is intractable (both in the synchrounous and asynchronous
variants) for nearly all natural voting rules, but is fixed-parameter
tractable with respect to the number of candidates.


\subsection{General Intractability of BSG}

As $\calR$-BSG is, in essence, a variant of the Shift Bribery
problem~\cite{elkind2009swap}, it naturally inherits most of its
hardness results.  The difference between $\calR$-BSG and standard
Shift Bribery is that the former involves the diffusion process after
the bribery. In the reduction below, we ``turn off'' this diffusion by
ensuring that every voter from a Shift Bribery instance forms an
isolated vertex in the society graph (as the diffusion anyhow does not
happen in isolated graphs, our proof works for both the synchronous and asynchronous
cases). We show the reduction for the Borda rule, but then we argue
that it is also applicable for many other rules.

\begin{proposition}\label{theorem:borda bsg np hard}
  Borda-BSG is $\np$-hard for both the synchronous and the asynchronous
  cases.
\end{proposition}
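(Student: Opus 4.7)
The plan is to reduce from ordinary unit-cost Shift Bribery for Borda, which is known to be $\np$-hard (Elkind et al.), by building an instance whose society graph is an edgeless collection of vertices, so that no diffusion step is ever applicable. Both the synchronous and asynchronous processes then halt immediately, and the $\calR$-BSG instance is satisfiable exactly when the underlying Shift Bribery instance is.

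Starting from a Shift Bribery instance on candidate set $C$ of size $m$ with voters having preference orders $\pi_1,\ldots,\pi_n$, preferred candidate $p$, and budget $B$, I would introduce $2n$ fresh dummy candidates $d_{1,1}, d_{1,2}, \ldots, d_{n,1}, d_{n,2}$ and define voter $i$'s extended order to be $\pi_i$ followed by $d_{1,1}\, d_{1,2}\, d_{2,1}\, d_{2,2}\, \cdots\, d_{n,1}\, d_{n,2}$, with the single pair $(d_{i,1}, d_{i,2})$ swapped in voter $i$ (and only there). Each voter then becomes a unique type of weight $1$. For distinct $i \neq j$, the extended orders of voters $i$ and $j$ disagree on two disjoint, non-adjacent pairs of consecutive positions (the $i$-th and $j$-th dummy pairs, which are separated by other dummies), so their swap distance is at least $2$. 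Thus the associated society graph contains no edges, and both diffusion processes are no-ops on the constructed instance.

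Next, I would verify that the bribery is preserved parsimoniously. Since every voter ranks the $C$-candidates strictly above every dummy, shifting $p$ up by one position in an extended order coincides with shifting $p$ up by one position in the corresponding original preference, at the same unit cost. Every candidate in $C$ gains exactly the same $2n$ extra Borda points per voter from the dummies, so the relative Borda-score gaps among $C$-candidates are preserved by the construction. A short calculation (each dummy sits in one of positions $m{+}1,\ldots,m{+}2n$ of every voter's order) shows that every dummy's total Borda score is at most $2n^2 - n - 1$, strictly less than the $2n^2$ extra points $p$ collects from the dummies alone, so no dummy can overtake $p$. Hence $p$ is a Borda winner of the post-bribery extended election iff $p$ is a Borda winner of the post-bribery original election, giving a cost-preserving reduction that works equally for both sync- and async-Borda-BSG.

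The main obstacle is the bookkeeping around the signature gadget: it must be simultaneously \emph{rigid} enough to keep all pairs of extended orders at swap distance at least $2$, and \emph{neutral} enough to preserve the Borda winner structure and to rule out any dummy from overtaking $p$. Once this is in place, the same edgeless-society trick immediately transfers hardness to many other natural rules (Copeland, Maximin, elimination-based rules, etc.), since $\calR$-BSG on an edgeless society is literally unit-cost Shift Bribery under $\calR$.
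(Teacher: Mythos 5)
Your proposal is correct and follows essentially the same route as the paper's proof: reduce from unit-cost Borda Shift Bribery, append a bottom-ranked dummy-candidate ``signature'' that keeps every pair of distinct voters at swap distance at least two (before and after any bribery), so the society graph is edgeless, diffusion is a no-op in both variants, and the dummies neither change the relative Borda standings among the original candidates nor let any dummy overtake $p$. The only difference is cosmetic --- the paper realizes the signature with two blocks $D$ and $E$ in which $d_i$ and $e_i$ are promoted to the front, while you swap the $i$-th adjacent dummy pair --- and both gadgets serve the identical purpose.
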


\begin{proof}
  An instance of Borda-Shift Bribery with Unit Costs (Borda-SB for
  short) consists of an election with candidate set
  $C = \{c_1, \ldots, c_m\}$, voters $v_1, \ldots, v_n$, a
  distinguished candidate $p \in C$, and budget $b$. We ask if it is
  possible to ensure that~$p$ becomes a Borda winner of this election
  by performing $b$ unit-cost shift bribery actions.  Borda-SB is
  well-known to be $\np$-hard~\cite[Proposition
  3]{bredereck2016prices}.

  Given an instance of Borda-SB, we create an instance of Borda-BSG.
  The idea is to alter the instance so that, even after any set of
  bribery actions, the swap distance between each two voters would be at least two; this will ensure that each voter is a singleton in the society graph
  and will prevent diffusion from happening. As Borda-BSG is, in
  essence, Borda-SB with diffusion, the result will follow.
  To this end, our instance of Borda-BSG is the same as the input
  Borda-SB distance, but with the following two changes:
  \begin{enumerate}
  \item We introduce two sets of additional dummy candidates,
    $D = \{d_1, \ldots, d_n\}$ and $E = \{e_1, \ldots, e_n\}$.
  \item We set the preference
    orders of the voters as follows. For each voter $v_i$, let $\mathrm{pref}(v_i)$ mean
    $v_i$'s original preference order regarding the candidates $c_1, \ldots, c_m$.
    We extend this preference order to be:
    $
      \mathrm{pref}(v_i) \pref d_i \pref D \setminus \{d_i\} \pref e_i
      \pref E \setminus \{e_i\},
    $
    where by $D \setminus \{d_i\}$ and $E \setminus \{e_i\}$ we mean,
    respectively, the preference orders
    $d_1 \pref d_2 \pref \cdots \pref d_n$ and
    $e_1 \pref e_2 \pref \cdots \pref e_n$ with candidates $d_i$ and
    $e_i$ removed.
  \end{enumerate}
  One can verify that in the resulting instance of Borda-BSG each two
  voters have preference orders that are at swap distance at least two
  and, thus, each voter is a singleton in the society graph associated
  with the instance. Moreover, the dummy vertices do not change the winner. As a consequece, the correctness of the reduction
  follows.
\end{proof}

\begin{remark}
The above proof works for every voting rule (i) for which Shift
Bribery with unit costs is $\np$-hard and (ii) whose results do not
change after we add some candidates that the voters rank last. Such rules
include, e.g., Copeland or Maximin~\cite{bredereck2016prices} and,
indeed, both conditions are commonly satisfied (yet, Plurality is an
example of a rule that fails the first criterion, and Veto is an
example of a rule that fails the second one).
\end{remark}

\subsection{Voting Rules and Integer Linear Programs}

In the next section we show that the BSG problem is fixed-parameter
tractable for the parametrization by the number of candidates. Our
algorithm is based on solving an integer linear program (ILP), and we
use the notion of \emph{ILP-expressible voting rules} to capture the
class of rules for which the algorithm is applicable.

Intuitively, we say that a voting rule $\calR$ is ILP-expressible if
the problem of deciding whether a given candidate $p$ is an election winner
can be expressed as a problem of testing whether a certain integer linear
program has a feasible solution. We require that the number of
variables and constraints in this program is a function of the
number of candidates and voter types in the election, and that the
election is specified through variables that represent the number of
voters of different types. Formally, we have the following definition.


\begin{definition}[\bf ILP-expressible voting
  rule]\label{definition:ilp expressible rule}
  Let $E$ be an election with candidate set $C$ and with a collection
  of $n$ voters.
  The preferences of the voters are encoded as a society
  $\mathbf{w} \in \N^{\tau}$, where the $j$-th entry of $\mathbf{w}$
  indicates how many voters of the $j$-th type are present in the
  election.  Let $p \in C$ be a distinguished candidate.
  A voting rule $\calR$ is \emph{ILP-expressible} if there exists a
  computable function $f$ and integers $\tau',r \leq f(\tau+|C|)$, a
  matrix $W \in \Z^{r \times (\tau + \tau')}$, and a vector
  $\mathbf{b} \in \Z^r$ such that (a) $W$ and $b$ are computable in
  FPT time with respect to $m+\tau$, and (b) it holds that $p$ is a
  winner of the election, i.e., $p \in \calR(E)$, if and only if:
  \[
    \exists \mathbf{x} \in \Z^{\tau'}: W \cdot [\mathbf{w}, \mathbf{x}]
    \leq \mathbf{b}, 
  \]
  where $[\mathbf{w}, \mathbf{x}] \in \Z^{\tau + \tau'}$ is the column
  vector obtained by concatenating $\mathbf{w}$ and $\mathbf{x}$.
\end{definition}

The class of ILP-expressible voting rules is quite large.  Indeed, it
includess all scoring rules, all C1 rules (these are rules depending
only on the majority graph of the input election), Bucklin, STV,
Kemeny, and many others. In the two examples below we show arguments
regarding the Borda rule and the STV rule.

\begin{example}[\bf Borda is ILP-expressible]
  We use the same notation as in Definition~\ref{definition:ilp
    expressible rule}.  Additionally, for a candidate $c \in C$ and
  voter type $j \in [\tau]$, by $\textrm{rank}(c,j)$ we mean the
  position on which $c$ is ranked by the type-$j$ voters.
  To show that Borda is ILP-expressible, we describe a collection of
  linear inequalities (defining the matrix $W$), which are satisfied exactly if
  the Borda score of $p$ is at least as large as the score of every
  other candidate:
  \[
    \sum_{\substack{i \in [m], j \in [\tau] \\ \textrm{rank}(c,j) =
        i}}
    (m-i) w_j \leq \sum_{\substack{i \in [m], j \in [\tau] \\
        \textrm{rank}(p,j) = i}} (m-i) w_j \quad \forall c \in C, c
    \neq p \, .
  \]
\end{example}

\begin{remark}\label{remark:or-in-ilp} 
  For our next example, as well as for some further arguments in the
  paper, we will need the ability to express a disjunction of several
  inequalities as an integer linear program. We now discuss how to
  achieve this (we mention that such tricks are well-known regarding
  ILPs~\cite[Section 7.4]{ilptricks}).  For the sake of simplicity,
  let $x_1, x_2, x_3$ be three ILP variables and let us consider four
  linear inequalities (values $a_{i,j}$ and $b_i$ are constants):
  \begin{align*}
    a_{1,1} x_1 + a_{1,2} x_2 + a_{1,3} x_3 &\leq b_1, \\
    a_{2,1} x_1 + a_{2,2} x_2 + a_{2,3} x_3 &\leq b_2, \\
    a_{3,1} x_1 + a_{3,2} x_2 + a_{3,3} x_3 &\leq b_3, \\ 
    a_{4,1} x_1 + a_{4,2} x_2 + a_{4,3} x_3 &\leq b_4. 
  \end{align*}
  Additionally, we assume that $x_1$, $x_2$, and $x_3$ come from some
  bounded domain (this is always the case for the variables that we
  use in our ILP programs, and can be guaranteed in general by standard techniques\footnote{For example, one may solve the (potentially unbounded) continuous relaxation, obtain a continuous optimum $\mathbf{x}^*$, apply the proximity theorem of Cook et al.~\cite{cook1986integer} to determine a box around $\mathbf{x}^*$ which must contain an integer optimum if it exists, and then introduce lower and upper bounds corresponding to this box, hence bounding all variables.}). We would like to express the fact that:
  \begin{align*}
    (\text{the first inequality} \text{ is satisfied}) &\lor 
    (\text{the second inequality is satisfied}) \\
   & \lor (\text{the last two inequalities are satisfied}).
  \end{align*}
  To this end, we introduce three new variables, $y_1$, $y_2$, and
  $y_3$ and the following (in)equalities:
  \begin{align*}
    0 \leq y_i \leq 1  \quad \forall i \in [3], \quad\quad y_1 + y_2 + y_3 = 1.
  \end{align*}
  This ensures that exactly one of these variables takes value $1$ and
  the other ones take value $0$.  Note that
  Definition~\ref{definition:ilp expressible rule} explicitly allows
  using such auxiliary variables. Let $T$ be the largest value that
  either of the left-hand sides of our four ineqalities may take
  (since we know the values $a_{i,j}$ and the ranges of variables
  $x_1$, $x_2$, and $x_3$, we have that the value $T$ is easy to compute). We replace
  our four initial inequalities with the following (note that the last
  two inequalities both involve variable $y_3$):
  \begin{align*}
    (-T+b_1)(1-y_1) &+  a_{1,1} x_i + a_{1,2} x_2 + a_{1,3} x_3 \leq b_1, \\
    (-T+b_2)(1-y_2) &+ a_{2,1} x_i + a_{2,2} x_2 + a_{2,3} x_3 \leq b_2, \\
    (-T+b_3)(1-y_3) &+ a_{3,1} x_i + a_{3,2} x_2 + a_{3,3} x_3 \leq b_3, \\ 
    (-T+b_4)(1-y_3) &+ a_{4,1} x_i + a_{4,2} x_2 + a_{4,3} x_3 \leq b_4. 
  \end{align*}
  By the choice of $T$ and the constraints on the $y$-variables, we
  see that we implemented exactly the desired disjunction. As an
  additional benefit, we can use variables $y_1$, $y_2$, and $y_3$ to
  read off which disjunction clause is satisfied.  
  While, for the sake of readability, our example regards only a few inequalities, it is clear that it can be generalized in a straightforward way.
\end{remark}

\begin{example}[\bf STV is ILP-expressible]
  The Single Transferrable Vote rule (STV) is defined via the following
  iterative process: If some candidate is a majority winner (i.e., is
  ranked first by more than half of the voters), then this candidate
  is declared a winner. Otherwise, the candidate with the lowest
  Plurality score is removed from the election and the next iteration
  starts. If several candidates have the same lowest Plurality score,
  then we use lexicographic tie-breaking (in our case, where
  $C = \{c_1, \ldots, c_m\}$, it means removing the candidate with the
  lowest index).

  \newcommand{\rf}{{\mathrm{first}}}

  To show that STV is ILP-expressible, we will first tackle a simpler
  task. Let $(c_{\pi(1)}, \ldots, c_{\pi(m')})$ be a sequence of $m'$
  ($m' \leq m$) candidates. We say that this is an \emph{elimination
    order} if $c_{\pi(1)}$ is removed in the first iteration,
  $c_{\pi(2)}$ is removed in the second interation, and so on, until
  $c_{\pi(m')}$ who is chosen as a winner in the $m'$-th iteration.
  We will provide a set of linear inequalities that are satisfied if
  and only if $(c_{\pi(1)}, \ldots, c_{\pi(m')})$ is a correct
  elimination order. For each $i \in [m]$ and $t \in [m']$, let
  $\rf(c_i,t)$ be the set of voter types that would rank $c_i$ on the
  first position if candidates $c_{\pi(1)}, \ldots, c_{\pi(t-1)}$ were
  deleted. 
  To
  ensure that the elimination order is correct, we introduce
  the following inequalities:
  \begin{enumerate}
  \item For each $i \in [m'-1]$ we have to ensure that $c_{\pi(i)}$
    has the lowest Plurality score among the candidates remaining in
    the $i$-th round (and that it has the lowest index among the
    candidates with the same Plurality score). For each
    $c_j \in C \setminus \{c_{\pi(1)}, \ldots, c_{\pi(i-1)}\}$ we
    introduce one of the following inequalities. If $j < \pi(i)$ (and,
    thus, $c_{\pi(i)}$ has to have strictly lower Plurality score than
    $c_j$), we have:
    \[
      \sum_{\ell \in \rf(c_j,i)} w_\ell < \sum_{\ell' \in \rf(c_{\pi(i)},i)} w_{\ell'},
    \]
    and if $j > \pi(i)$ then we have an analogous inequality, but with ``$<$'' replaced
    with ``$\leq$.''
  \item For each $i \in [m'-1]$ we have to ensure that neither of the
    remaining candidates has majority support. Thus, for each
    $i \in [m'-1]$ and each
    $c_j \in C \setminus \{c_{\pi(1)}, \ldots, c_{\pi(i-1)}\}$ we have the
    inequality:
    \[
      \sum_{\ell \in ref(c_j,i)}w_\ell \leq \nicefrac{1}{2}\sum_{\ell' \in [\tau]}w_{\ell'}.
    \]
  \item We require that $c_{\pi(m)}$ is selected in the final round,
    so we also have inequality:
    \[
      \sum_{\ell \in ref(c_{\pi{m}},m)}w_\ell > \nicefrac{1}{2}\sum_{\ell' \in [\tau]}w_{\ell'}.
    \]
  \end{enumerate}
  To ensure that $p$ is an STV winner, we use the above approach to
  generate inequalities for every possible elimination order that ends
  with $p$, and---using the trick from
  Remark~\ref{remark:or-in-ilp}---we form their disjunction.

\end{example}

\begin{remark}
  Our class of ILP-expressible rules is very similar to the class of
  election systems \emph{described by linear inequalities} of Dorn and
  Schlotter~\cite{DornS:2012}; their definition says that there must
  exist $f(m)$ linear systems $W_i \mathbf{w} \leq \mathbf{b}_i$,
  $i \in [f(m)]$, such that $p$ is a winner if at least one of these
  systems is satisfied.  In other words, it means that $\mathcal{R}$
  can be described by a bounded disjunction of linear systems.  Since
  such disjunctions can be expressed in ILPs, the
  rules that fit their definition also fit ours.\footnote{Note that we need the variables to be bounded to implement such a disjunction.
  However, as demonstrated previously, boundedness is always achievable by standard techniques and, so, it is not an issue.}  Another related notion, called
  \emph{integer-linear-program implementable rules}, due to
  Faliszewski, Hemaspaandra, and Hemaspaandra~\cite[Definition
  6.1]{multimodeattacks}, is weaker than ours in that it does not
  allow for auxiliary variables $\mathbf{x}$; without these variables
  it is not clear how to define, e.g., Bucklin or STV (for example, it
  is not clear how to implement disjunctions without auxiliary
  variables).
\end{remark}

\subsection{Fixed-Parameter Tractability of BSG}\label{sec:ilp-algo}

We prove that $\calR$-BSG is FPT with respect to the number $m$ of
candidates for any ILP-expressible rule.  The result follows by
formulating $\calR$-BSG as an integer linear program and invoking
Lenstra's famous result~\cite{lenstra1983integer} (which implies that
ILP is FPT with respect to the number of integer variables); it is
arguably quite surprising, since, as it turns out, it is possible to
encode the complete diffusion process using integer variables and
linear constraints.
We first prove the result for the synchronous variant, and then show
how to modify it to work also for asynchronous diffusion.

\begin{theorem}\label{theorem:ilp one}
  Synchronous $\calR$-BSG is fixed-parameter tractable with respect to
  the number $m$ of candidates for every ILP-expressible voting rule
  $\calR$.
\end{theorem}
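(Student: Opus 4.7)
The plan is to formulate synchronous $\calR$-BSG as an integer linear program whose number of integer variables is bounded by a computable function of $m$ alone, and then invoke Lenstra's algorithm~\cite{lenstra1983integer} to solve it in FPT time. Since $\tau \leq m!$, any variable count polynomial in $\tau$, together with the auxiliary variables guaranteed by Definition~\ref{definition:ilp expressible rule}, is acceptable.

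The ILP will use three groups of variables. The \emph{bribery variables} $b_{j,k}$, one for every original type $j$ and every feasible number of up-shifts $k$ of $p$, specify how many voters of type $j$ receive exactly $k$ shifts. The \emph{diffusion variables} are indexed by a round $t \in \{0,1,\ldots,\tau\}$ (which is enough by Proposition~\ref{proposition:convergence}) and consist of the weight $w_j^{(t)}$ of each type, a binary absorption indicator $a_{j,j'}^{(t)}$ for every edge $\{v_j,v_{j'}\}$ of the society graph, and an auxiliary flow variable $f_{j,j'}^{(t)}$ sending weight from $j$ to $j'$ during the $t$-th round. The \emph{rule variables} are the auxiliary variables $\mathbf{x}$ guaranteed by the ILP-expressibility of $\calR$, instantiated on the final society $\mathbf{w}^{(\tau)}$.

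The constraints enforce five families of conditions. (a) The budget $\sum_{j,k} k\cdot b_{j,k}\leq b$, conservation $\sum_k b_{j,k} = w_j$, and initial weights $w_j^{(0)} = \sum_{(j',k)\,:\,\sigma(j',k)=j} b_{j',k}$, where $\sigma(j',k)$ denotes the type obtained from $j'$ by shifting $p$ up by $k$ positions. (b) The at-most-one-absorber rule $\sum_{j'\in N(v_j)} a_{j,j'}^{(t)} \leq 1$. (c) The strict-majority rule forcing $a_{j,j'}^{(t)}=1$ iff $2 w_{j'}^{(t)} > \sum_{u\in N[v_j]} w_u^{(t)}$, encoded (in the style of Remark~\ref{remark:or-in-ilp}) by the two big-$M$ implications $2 w_{j'}^{(t)} \geq 1 + \sum_{u\in N[v_j]} w_u^{(t)} - M(1-a_{j,j'}^{(t)})$ and $2 w_{j'}^{(t)} \leq \sum_{u\in N[v_j]} w_u^{(t)} + M\sum_{j''} a_{j,j''}^{(t)}$, with $M:=n$. (d) The flow identities $f_{j,j'}^{(t)} = w_j^{(t)}$ when $a_{j,j'}^{(t)}=1$ and $0$ otherwise, implemented by the standard big-$M$ triple $f_{j,j'}^{(t)} \leq w_j^{(t)}$, $f_{j,j'}^{(t)} \geq w_j^{(t)} - M(1-a_{j,j'}^{(t)})$, $f_{j,j'}^{(t)} \leq M\, a_{j,j'}^{(t)}$. (e) The simultaneous update $w_j^{(t+1)} = w_j^{(t)} - \sum_{j'} f_{j,j'}^{(t)} + \sum_{j'} f_{j',j}^{(t)}$. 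Finally, the rule block certifies $p\in\calR(\mathbf{w}^{(\tau)})$.

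The total number of integer variables is polynomial in $\tau$ plus the computable function guaranteed by Definition~\ref{definition:ilp expressible rule}, hence a computable function of $m$ alone, and Lenstra's theorem then gives FPT running time in $m$. The main obstacle is not the count but the correctness of the diffusion block: one must argue that (c) truly pins each $a_{j,j'}^{(t)}$ to the unique correct assimilation pattern, using the fact that a strict majority in a single closed neighborhood is necessarily unique (two such majorities would together exceed the total neighborhood weight); that (e) correctly implements simultaneous transfer even when a vertex is absorbed while also receiving weight from other neighbors, so that it inherits only the newly incoming weight; and that $M=n$ is safe because total weight is conserved by every diffusion step. Once these are verified, the encoding faithfully simulates bribery followed by synchronous diffusion followed by the winner check, and the theorem follows.
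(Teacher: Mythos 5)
Your proposal is correct and follows essentially the same route as the paper: encode the bribery, the $\le\tau$ rounds of synchronous diffusion (via per-round weight variables, binary assimilation indicators whose truth is pinned by big-$M$/disjunction tricks, and linearized weight-transfer products), and the ILP-expressible winner condition into a single integer program with a number of variables bounded by a function of $m$, then invoke Lenstra. The only nits are that $M=n$ is off by one in your first implication (with $w_{j'}^{(t)}=0$ and full neighborhood weight $n$ you need $M\ge n+1$), and that, as the paper does explicitly in a preprocessing step, one should add all $m!$ preference orders as (possibly zero-weight) vertices so that the targets $\sigma(j',k)$ of bribery actions are present in the society graph.
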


\begin{proof}
  As a preprocessing phase, we augment the given society graph to have
  exactly $m!$ vertices, one vertex for each possible preference
  order; to this end, we might create some vertices of weight
  zero.\footnote{\label{footnote:one}%
    This phase is needed as we want to consider bribery operations,
    which might in certain cases introduce new preference orders not
    originally present in the election.  To avoid formal difficulties,
    we simply introduce those preference orders beforehand.}  Thus,
  the number of types in the input is $\tau = m!$, and the number of
  voters of type $i$, $i \in [\tau]$, is $w_i$.
%
%
  Let $k$ be the number of steps of the diffusion process;
  Proposition~\ref{proposition:convergence} says that $k \leq \tau$
  and, thus, we simply set $k = \tau = m!$.  For $i \in [\tau]$,
  denote by $N[i]$ the closed neighborhood of $i$ (i.e., the set that
  includes $i$ and all the vertices that are directly connected to it
  by and edge). Similarly, by $N(i) = N[i] \setminus \{i\}$ we denote
  $i$'s open neighborhood.  We use the Iverson bracket notation, i.e.,
  for a logical expression $F$, we write $[F]$ to mean $1$ when $F$ is
  true and to mean $0$ when $F$ is false.

  We construct an ILP with the following variables:
  \begin{enumerate}
  \item For each type $i \in [\tau]$ and each diffusion step
    $\ell \in [k]$, we define an integer variable $x_i^\ell$
    representing the number of voters of type $i$ after $\ell$
    diffusion steps.
  \item For types $i,j \in [\tau]$ we define variables $\beta_{ij}$
    describing the bribery, where $\beta_{ij}$ corresponds to the
    number of voters bribed from being of type $i$ to being of type
    $j$; note that we also consider $\beta_{ii}$, the number of voters
    of type $i$ which are not bribed.
  \item For every $i,j \in [\tau]$ and $\ell \in [k]$, we define a
    binary variable $z_{ij}^\ell$ indicating whether in the $\ell$-th
    step the voters of type $i$ are being assimilated into type~$j$
    (for technical reasons, we also use variables $t^\ell_{ij}$; see
    explanations below).
  \end{enumerate}

  Let $c_{ij}$ be the cost of bribing one voter of type $i$ to become
  a voter of type $j$ (we set $c_{ij}$ to be $\infty$ if $j$ is not
  reachable from $i$).  As our aim is to minimize the cost of bribery,
  the objective of our ILP is to minimize
  $\sum_{i,j} c_{ij} \beta_{ij}$.
  Our ILP constraints are presented in Figure~\ref{ilp}; note that
  some of them are non-linear. Below we discuss their meaning and, for
  the non-linear ones, we explain how they can be encoded within an
  ILP.

\begin{figure}
\begin{align}
&\textstyle\sum_{j=1}^\tau \beta_{ij} = w_i & \forall i \in [\tau] \label{eq:brib1} \\ 
&\textstyle\sum_{i=1}^\tau \beta_{ij} = x^0_j & \forall j \in [\tau] \label{eq:brib2} \\
&z_{ij}^\ell = \left[\textstyle\sum_{a \in N[i]} \frac{1}{2}x_a^{\ell -1}  < \textstyle x_j^{\ell -1}\right] & \forall j \in N(i),\ell \in [k] \label{eq:cond} \\
&\textstyle\sum_{j \in N[i]} z_{ij}^\ell = 1 & \forall i \in [\tau] \label{eq:3} \\
&t_{ij}^{\ell} = z_{ij}^\ell x_i^{\ell -1} & \forall i,j \in [\tau], \ell \in [k] \label{eq:nonlinear} \\
&x_j^\ell = \textstyle\sum_{i \in N[j]} t_{ij}^\ell & \forall j \in [\tau],\ell \in [k] \label{eq:move} \\
&W \cdot [\mathbf{y}, \mathbf{x^k}] \leq \mathbf{b} & \label{eq:voting_rule}
\end{align}
\caption{\label{ilp}Constraints used in the proof of
  Theorem~\ref{theorem:ilp one}. We omitted the simple constraints
  requiring that the variables are in the right domains for clarity.}
  \end{figure}

  \begin{description}
  \item[ Constraints~{(\ref{eq:brib1})} and~(\ref{eq:brib2}).]  These
    constraints are standard and express that the vector
    $\mathbf{x}^0$ describes the society after the bribery (recall
    that $\beta_{ii}$ corresponds to non-bribed voters of type
    $i$).

  \item[Constraint~(\ref{eq:cond}).] This constraint assigns $1$ to
    $z_{ij}^\ell$ if the weight in type $j$ exceeds half of the total
    weight of $N[i]$ and $0$ otherwise.  Note that we do \emph{not}
    affect $z_{ii}^\ell$ here as the constraint goes only over $j$ in
    the open neighborhood $N(i)$.  Since Constraint~(\ref{eq:cond}) is
    non-linear, using the approach from Remark~\ref{remark:or-in-ilp},
    for each $i$, $j$, and $\ell$ we express it as a disjunction of
    two inequalities:
    \[
      \left[\textstyle\sum_{a \in N[i]} \frac{1}{2}x_a^{\ell -1}  < \textstyle x_j^{\ell -1}\right] \lor
      \left[\textstyle\sum_{a \in N[i]} \frac{1}{2}x_a^{\ell -1}  \geq \textstyle x_j^{\ell -1}\right].
    \]
    The approach taken in Remark~\ref{remark:or-in-ilp} provides us
    with a binary variable, which takes value $1$ if the first
    inequality is satisfied and value $0$ otherwise. We simply take
    $z^\ell_{ij}$ to be this variable.

  \item[Constraint~(\ref{eq:3}).] This constraint enforces that at
    least one of $z_{ij}^\ell$ is $1$, and this includes
    $z_{ii}^\ell$; thus, if there is no $j \in N(i)$ with weight more
    than half of the weight of $N[i]$, then $z_{ii}^\ell=1$ holds,
    which corresponds to $i$ keeping its weight (i.e., voters of type
    $i$ are not being assimilated into some other type).

  \item[Constraints~(\ref{eq:nonlinear}) and~(\ref{eq:move}).] These
    constraints define the weights for step $\ell$, given the weights
    from step $\ell - 1$.  Precisely, $x_j^\ell$ takes the weight of
    all its neighbors (including itself) for whom $z_{ij}^\ell=1$.  We
    use the $t_{ij}^{\ell}$ variables as temporary variables that are
    non-zero for those $i$ and $j$ for which $z_{ij}^\ell = 1$.
    Notice that Constraint~\eqref{eq:nonlinear} is non-linear but,
    again, can be handled using standard ILP tricks. Indeed, we can
    express the constraint as:
  \[
    \left((z_{ij}^\ell = 1) \implies (t_{ij}^l =
      x_{i}^{\ell-1})\right) \wedge \left((z_{ij}^\ell = 0) \implies
      (t_{ij}^l = 0)\right)\ .
  \]
  Since variables $z_{ij}^\ell$ are binary and implication can be seen
  as a disjunction, we can further transform this into:
  \[
    \left((z_{ij}^\ell = 0) \lor (t_{ij}^l = x_{i}^{\ell-1})\right)
    \wedge \left((z_{ij}^\ell = 1) \lor (t_{ij}^l = 0)\right)\ .
  \]
  We express the disjunctions as in Remark~\ref{remark:or-in-ilp}.
  Everything else is linear.

\item[Constraint~(\ref{eq:voting_rule}).] This constraint corresponds
  to the specific voting rule being considered, with $\mathbf{y}$ as
  the auxiliary variables (called $\mathbf{x}$ in
  Definition~\ref{definition:ilp expressible rule}); it is satisfied
  if and only if the given, preferred candidate $p$ wins the election
  specified by $\mathbf{x^k}$ (i.e., the election after the bribery
  and at the end of the diffusion process). The constraint can be
  expressed as part of an ILP due to our assumption that $\calR$ is
  ILP-expressible.
  \end{description}
  This completes the description of our ILP. As the number of
  variables is a function of the number of candidates, we solve it in
  FPT time using the algorithm of Lenstra~\cite{lenstra1983integer}.
\end{proof}


\begin{corollary}
  Asynchronous $\calR$-BSG is fixed-parameter tractable with respect
  to the number $m$ of candidates for every ILP-expressible voting
  rule $\calR$, for both the optimistic and the pessimistic variants.
\end{corollary}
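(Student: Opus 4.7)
The plan is to extend the ILP from the proof of Theorem~\ref{theorem:ilp one} so that the diffusion encoding reflects the asynchronous rule: at each step $\ell \in [k]$ at most one vertex is selected for possible assimilation. I would introduce binary variables $s_i^\ell$ with $\sum_{i \in [\tau]} s_i^\ell \leq 1$, replace Constraint~\eqref{eq:cond} by $z_{ij}^\ell = s_i^\ell \cdot \bigl[\sum_{a \in N[i]} \tfrac{1}{2} x_a^{\ell-1} < x_j^{\ell-1}\bigr]$ for $j \in N(i)$, and set $z_{ii}^\ell = 1 - \sum_{j \in N(i)} z_{ij}^\ell$ so that non-selected vertices retain their weights via Constraints~\eqref{eq:nonlinear}--\eqref{eq:move}. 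The bound $k = \tau = m!$ on the number of steps still holds by Proposition~\ref{proposition:convergence}, and the new non-linearities are handled by the disjunction trick from Remark~\ref{remark:or-in-ilp}.

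For the optimistic variant, the $s_i^\ell$ variables embody an existentially quantified choice of a diffusion order, so feasibility of this modified ILP (with the bribery budget $\sum_{ij} c_{ij}\beta_{ij} \leq b$ appended) is precisely the question of whether some bribery and some order lead to $p$ winning. Lenstra's algorithm solves the resulting ILP in FPT time with respect to $m$.

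For the pessimistic variant, $p$ must win under every converging order. Because the outcome of the diffusion is determined entirely by the realized maximal irredundant subsequence of the chosen order, and because every such subsequence has length at most $\tau = m!$ by Proposition~\ref{proposition:convergence}, the set of distinct candidate sequences $\sigma$ (each being a list of at most $\tau$ pairs $(i,j)$ with $j \in N(i)$) is bounded by a function of $m$ alone. I would enumerate these candidates $\sigma$ and build a single ILP with a common bribery vector $\beta$. For every $\sigma$, I add a dedicated block of constraints expressing the implication ``if $\sigma$ is a valid maximal irredundant sequence from the post-bribery society, then $p$ wins at the state reached by $\sigma$.'' Since a fixed $\sigma$ makes every intermediate weight a linear function of $\beta$, both validity (every step in $\sigma$ is non-redundant at the time it is taken, and no further non-redundant step exists at the end) and the winning condition are linearly expressible, and the implication is implemented via an indicator variable with a big-$M$ term as in Remark~\ref{remark:or-in-ilp}. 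The total number of variables remains a function of $m$ only, so Lenstra's algorithm again yields FPT running time.

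The main obstacle is the pessimistic encoding: one must ensure that every converging order is actually covered, including sequences that become maximal after fewer than $\tau$ real steps; that the closing condition ``no further non-redundant step is possible'' is correctly linearized over all vertex--neighbor pairs; and that the ``valid $\Rightarrow$ $p$ wins'' implication does not vacuously hide a bribery under which some other $\sigma$ leads to $p$ losing. Once these bookkeeping issues are handled, the argument reduces to applying Lenstra's algorithm as in the synchronous case.
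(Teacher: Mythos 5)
Your overall strategy coincides with the paper's: for the optimistic variant, augment the synchronous ILP of Theorem~\ref{theorem:ilp one} with per-step selector variables so that at most one vertex is (possibly) assimilated in each step, and for the pessimistic variant, use Proposition~\ref{proposition:convergence} to bound the number of relevant diffusion orders, enumerate them, and conjoin over all of them the requirement that $p$ wins. Your pessimistic part, including the caveats you list yourself, is essentially the paper's ``long conjunction of ILP blocks, one per permutation of $[\tau]$,'' and your selector variables $s_i^\ell$ play the same role as the paper's $y_i^\ell$ (the paper implements the conjunction $z_{ij}^\ell \wedge y_i^\ell$ via the inequalities $\hat{z}_{ij}^\ell \leq z_{ij}^\ell$ and $\hat{z}_{ij}^\ell \leq y_i^\ell$ rather than a product, but that is cosmetic).

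There is, however, one concrete omission in your optimistic encoding: you never force the final state $\mathbf{x}^k$ to be a \emph{converged} state. With $\sum_{i} s_i^\ell \leq 1$ the solver may select no vertex (or only redundant vertices) in every step, so a feasible solution may certify that $p$ wins in some intermediate, non-converged society---or even in the post-bribery society before any diffusion at all---whereas the problem asks about the election obtained after the diffusion process has halted. The paper closes exactly this hole by adding, for every pair of types $i$ and $j$, a constraint of the same shape as Constraint~\eqref{eq:cond} stating that after the $k$-th step it is impossible to assimilate $i$ into $j$. You do state the ``no further non-redundant step is possible'' closing condition for the pessimistic variant, but it is equally necessary in the optimistic one; without it the ILP is not equivalent to optimistic $\calR$-BSG. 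Once that constraint is added (linearized via Remark~\ref{remark:or-in-ilp}), your argument goes through exactly as in the paper, and Lenstra's algorithm gives the claimed FPT bound.
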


\begin{proof}
  For the optimistic variant, we modify the ILP described above as
  follows.  We add variables $y_i^\ell$, representing whether type $i$
  is updated in the $\ell$-th step. We require that
  $\sum_{i} y_i^\ell = 1$ to enforce that exactly one vertex is
  updated.  We add variables $\hat{z}_{ij}^\ell$ and we want to
  enforce that $\hat{z}_{ij}^\ell = z_{ij}^\ell \wedge y_i^{\ell}$;
  the interpretation is that $i$ might be assimilated only if
  $y_i^\ell = 1$. However, instead of directly encoding
  $\hat{z}_{ij}^\ell = z_{ij}^\ell \wedge y_i^{\ell}$, in our case it
  suffices to add constraints
  $\hat{z}_{ij}^\ell \leq z_{ij}^\ell, \, \hat{z}_{ij}^\ell \leq
  y_i^\ell$.  Then, it suffices to replace $z_{ij}^\ell$ with
  $\hat{z}_{ij}^\ell$ in constraint~\eqref{eq:move}.
  Finally, we need to enforce that the diffusion ends after at
  most $k$ steps. To this end, for each pair of types $i$, $j$ we add
  a constraint requiring that it is impossible to assimilate $i$ into
  $j$ after the $k$-th diffusion step (the constraint can be expressed analogously to
  constraint~\eqref{eq:cond} and we do not spell it out explicitly).

  For the pessimistic variant, notice that any sequence of diffusion
  steps which converges contains a maximal irredundant sequence, and
  irredundant sequences are of length at most $\tau$ (this follows from
  Proposition~\ref{proposition:convergence}).  It thus suffices to
  consider the set of permutations of $[\tau]$. Expressing that in
  none of them $p$ is losing can be done by a long conjunction of ILPs
  given by constraints~\eqref{eq:cond}-\eqref{eq:voting_rule}, with a
  clause for each sequence (note that we could have used analogous
  approach for the optimistic variant too, but the presented approach is more
  efficient).
\end{proof}

%

\section{Model Generalizations}\label{section:model extensions}

Here we generalize the simple model described above and demonstrate
far broader scenarios for which $\calR$-BSG remains fixed-parameter
tractable. In particular, we consider models with arbitrary connections between voter types, models with different bribery operations and manipulative actions, and models with different diffusion processes.

\subsection{Various Voter Types}

Instead of partitioning the voters by preference orders, we can
consider arbitrary partitions.  Our motivation may be either that the
partition by preference orders is too crude when there are significant
differences between voters with the same preference (e.g., young and
old voters are convinced by different methods), or, on the other hand,
the partition may be too fine-grained when different preference orders should
nonetheless be treated identically, e.g., because of the choice of a
voting rule which does not significantly distinguish them.  As the
number of variables in the ILP described in the proof of
Theorem~\ref{theorem:ilp one} depends only on the numbers of types and
candidates, it follows that $\calR$-BSG remains fixed-parameter
tractable with respect to the number~$\tau$ of types and the number of
candidates.
Taken to the extreme, namely if we set each voter in a given election
to constitute her own voter type, we arrive at the model of diffusion
studied, e.g., by Wilder and Vorobeychik~\cite{wilder2017controlling}
or by Bredereck and Elkind~\shortcite{bredereck2017manipulating}.

\subsection{Arbitrary Bribery Operations and Manipulative Actions}

Our model can incorporate, e.g., all bribery operations mentioned by
Faliszewski and Rothe~\shortcite{briberysurvey}. Indeed, the constants
$c_{ij}$ used in the proof of Theorem~\ref{theorem:ilp one} encode the
cost of transforming a voter of type $i$ into a voter of type $j$ and
can be redefined for other bribery operations.

Furthermore, following the discussion of Knop et
al.~\shortcite[Section 3.2]{bestpaperever}, this approach can be
extended to other types of manipulative operations, such as
\emph{voter control}~\cite{briberysurvey}, at no asymptotical cost in
terms of computational complexity.  Specifically, say that some voters
are active while others are latent, and there is a cost for activating
a latent voter or vice versa.  Accordingly, we can define two types
for each preference order, one corresponding to it being active while
the other corresponding to it being latent, and set the cost of
``moving'' a voter from the active type to the latent and/or vice
versa according to the specific control operation.  Notice that this
only doubles the number of voter types, so the asymptotic complexity
remains intact.

\subsection{General Diffusion Processes}
\label{subsec:extensions}

So far our society graphs had an undirected edge between two vertices
if their corresponding preference orders were of swap distance one,
and we considered a specific, simple diffusion process.  In fact, our
model can incorporate directed arcs, where a vertex would be
influenced by those vertices for which it has an outgoing arc and, in
particular, we do not have to be confined to connections between types
associated with preferences that differ in the ranking of a single
pair of candidates.
Furthermore, those arcs can be weighted,
representing different influence strengths 
(e.g., consider damping the influence of voters which are, swap distance-wise, farther).
Adding weights can be done by modifying Equation~\eqref{eq:cond} in a straightforward way.

Moreover, and most importantly, we can express in our model a large
class of diffusion processes.  The following definition is inspired by
viewing the diffusion of preferences as an abstract process, in which
each voter holds a local election to decide which preference order to
assume.  For example, the diffusion process described in
Section~\ref{section:preliminaries} corresponds to holding an election
containing the voters of swap distance at most one, and changing to
the preference order of the majority, if such exists.
%
In the definition below we use the notation from Theorem~\ref{theorem:ilp one}. 

\begin{definition}[\bf ILP-expressible diffusion
  process]\label{def:ilp expressible diff}
  Let $k$ be an upper bound on the number of diffusion steps, recall
  that for diffusion step $\ell \in [k]$, the variables
  $x_i^{\ell - 1}$, $i \in [\tau]$, express the current society, and
  let $f$ be a computable function.  Then, an \emph{ILP-expressible
    diffusion process} is a process such that for each
  $i,j \in [\tau]$ and $\ell \in [k]$, there are integers
  $r(i,j, \ell), \tau(i,j,\ell) \leq f(\tau)$, a matrix
  $D_{i, j, \ell} \in \Z^{r(i,j,\ell) \times \tau + \tau(i,j,\ell)}$,
  and a vector $\mathbf{b}_{i,j,\ell} \in \Z^{r(i,j,\ell)}$ such that,
  in the $\ell$-th diffusion step, voters of type $i$ are assimilated
  into type $j$ if and only if the following formula is satisfied:
$$\exists \mathbf{x}' \in \Z^{\tau(i,j,\ell) } \quad D_{i,j,\ell}(\mathbf{x}', \mathbf{x}^{\ell - 1}) \leq \mathbf{b}_{i,j,\ell}\ .$$
\end{definition}

Our basic diffusion process corresponds to Equation~\eqref{eq:cond}.
Another ILP-expressible diffusion process is that each voter replaces
her preference order by the Kemeny ranking computed for the voters in
her neighborhood (for the definition of a Kemeny ranking, see the
original research paper~\cite{kem:j:no-numbers} or, e.g., the chapter
of Fischer et al.~\cite{fis-hud-nie:b:tournaments}).




\begin{remark} \label{rem:noconvergence}
  Proposition~\ref{proposition:convergence} does not hold for all
  generalized diffusion processes, as the number of diffusion steps
  might not be bounded by the size of the society graph or the diffusion
  may never stabilize (cf. Remark~\ref{ex:periodic} below).  (Also,
  new voter types might sometimes appear as a result of diffusion
  steps.)  Thus, the corresponding ILP to solve $\calR$-BSG would have
  to be supplied with the number $k$ of diffusion steps to simulate.
  Sometimes it is indeed plausible that an agent
  can estimate the number of diffusion steps to occur after the
  manipulative actions, e.g., when he knows the time of the election,
  or when it is provable (although differently than by the argument
  of Proposition~\ref{proposition:convergence}) that the process
  stabilizes after $k$ steps.

\end{remark}

\begin{theorem} \label{thm:bsg-generalization} $\calR$-BSG is
  fixed-parameter tractable with respect to the number of candidates,
 the number $\tau$ of types, and the number $k$ of diffusion steps
  if both $\calR$ and the diffusion process are ILP-expressible.
\end{theorem}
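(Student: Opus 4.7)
My plan is to modify the ILP from the proof of Theorem~\ref{theorem:ilp one} along three axes and then invoke Lenstra's algorithm~\cite{lenstra1983integer} exactly as before. First, the number of types $\tau$ is now a free parameter, so I skip the preprocessing that inflates the society graph to $m!$ types and simply work with the given $\tau$ types. Second, the convergence bound from Proposition~\ref{proposition:convergence} is no longer available (see Remark~\ref{rem:noconvergence}), so I take the supplied $k$ as an explicit diffusion horizon rather than setting $k=\tau$. Third, and most importantly, I replace the hard-coded diffusion Constraint~\eqref{eq:cond} with the generic condition from Definition~\ref{def:ilp expressible diff}.

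The skeleton of the construction is otherwise unchanged. I keep the variables $x_i^\ell$ (society after step $\ell$), the bribery variables $\beta_{ij}$ with objective $\sum_{i,j} c_{ij} \beta_{ij}$ subject to the linking Constraints~\eqref{eq:brib1}--\eqref{eq:brib2}, the binary assimilation indicators $z_{ij}^\ell$ satisfying the partition Constraint~\eqref{eq:3}, the temporary variables $t_{ij}^\ell$ and the bookkeeping Constraints~\eqref{eq:nonlinear}--\eqref{eq:move} linearised using Remark~\ref{remark:or-in-ilp}, and the winner Constraint~\eqref{eq:voting_rule} instantiated on $\mathbf{x}^k$; this last constraint is available verbatim because $\calR$ is assumed to be ILP-expressible.

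The substantive new step concerns Constraint~\eqref{eq:cond}. For each triple $(i,j,\ell)$, Definition~\ref{def:ilp expressible diff} supplies auxiliary integer variables $\mathbf{x}'_{i,j,\ell}$ of dimension at most $f(\tau)$, a matrix $D_{i,j,\ell}$, and a right-hand side $\mathbf{b}_{i,j,\ell}$ such that assimilation of $i$ into $j$ at step $\ell$ occurs exactly when the associated integer system is feasible with respect to $\mathbf{x}^{\ell-1}$. I would introduce one fresh copy of the auxiliaries per triple and encode the biconditional ``$z_{ij}^\ell = 1$ iff $D_{i,j,\ell}(\mathbf{x}'_{i,j,\ell},\mathbf{x}^{\ell-1}) \leq \mathbf{b}_{i,j,\ell}$'' by combining a big-$M$ activation in the forward direction with an infeasibility certificate in the reverse direction, all packaged via the disjunction machinery of Remark~\ref{remark:or-in-ilp}.

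The main obstacle is precisely this biconditional: forcing $z_{ij}^\ell = 0$ to imply infeasibility of the auxiliary system is harder than forcing feasibility. The standard remedy is to observe that all variables live in an FPT-bounded box (as noted in Remark~\ref{remark:or-in-ilp}, one can invoke the proximity bounds of Cook et al.~\cite{cook1986integer}); on such a box the negation of the existential formula reduces to an FPT-sized disjunction of linear conditions, which can be folded into the program. A straightforward count then yields $O(k\tau)$ state variables, $O(\tau^2)$ bribery variables, $O(k\tau^2)$ indicator and temporary variables, $O(k\tau^2 \cdot f(\tau))$ diffusion auxiliaries, and the $f(m+\tau)$ voting-rule auxiliaries, all bounded by a computable function of $m$, $\tau$, and $k$. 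Lenstra's algorithm~\cite{lenstra1983integer} thus solves the resulting program in FPT time and completes the proof.
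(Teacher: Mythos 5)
Your proposal matches the paper's intended argument exactly: the paper states Theorem~\ref{thm:bsg-generalization} without an explicit proof, relying precisely on the modification you describe---drop the $m!$ preprocessing, treat $k$ and $\tau$ as supplied parameters, substitute the generic system of Definition~\ref{def:ilp expressible diff} for Constraint~\eqref{eq:cond}, and re-invoke Lenstra. If anything you are more careful than the paper, since you flag the one genuinely delicate point (forcing $z_{ij}^\ell=0$ to certify \emph{in}feasibility of an existential system, which is trivial in Theorem~\ref{theorem:ilp one} where the condition is a single inequality but requires bounded domains and quantifier elimination in the general case) that the paper passes over in silence.
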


\subsection{Exemplary Models}

We conclude this section with several examples of scenarios that are
captured by such generalized diffusion processes.

\begin{example}[\bf Multidimensional Societies]
  Consider voters of different age groups.  It is plausible that the
  tendency to be influenced by other voters depends on age, and so we
  might
  have a voter type for each tuple of (preference order, age group),
  with different outgoing arcs and different diffusion conditions.
  Note that we can thus express, e.g., that voters may be strongly
  influenced by some voter groups, yet cannot be assimilated into them
  (e.g., a junior person can be influenced by a senior one, but this
  does not make him or her senior).
\end{example}

\begin{remark}[\bf Periodic behavior] \label{ex:periodic} As an
  example of how Proposition~\ref{proposition:convergence} might not
  hold (as noted in Remark~\ref{rem:noconvergence}), consider that
  there are three voter types ($0,1,2$) subdivided into young (Y) and
  old (O).  Moreover, let's say that young people change their mind,
  but old people don't.  Next, people of type $i$ are influenced by
  people of type $i+1 \mod 3$.  Denote $w_{i,T}$ for $i \in \{0,1,2\}$
  and $T \in \{Y, O\}$ the number of people of type $i$ and age group
  $T$.  Finally, say that $\min_i w_{i,O} \geq \max_i w_{i,Y}$, i.e.,
  there are at least as many old people of each type as there are
  young people of any type.  Consider the synchronous model. Then, in
  each iteration, the young people of type $i$ move to type
  $i+1 \mod 3$, and the diffusion does not stabilize but oscillates
  periodically.  (Similar behavior can be achieved also in the
  asynchronous model.)
\end{remark}

\begin{example}[\bf Advanced diffusion model]
  Say there are four candidates $\{c_1, \dots, c_4\}$,
  $3$~age groups $\{Y, M, O\}$ (for Young, Middle-aged, and Old), and
  two ``stubbornness'' levels $\{P, S\}$ (for Persuadable and
  Stubborn).  People are divided into types by their preference, age
  group, and stubbornness level. Since there are $4! = 24$ possible
  preference orders, there are altogether
  $\tau = 3 \cdot 2 \cdot 24 = 144$ possible voter types.  Next, we
  describe ``coefficients of influence'' which will be used to define
  the diffusion process.  Define a function $f$ as follows:
  \begin{align*}
    &f(Y,Y) = 1.2, &
    &f(M,Y) = 0.8, &
    &f(M,M) = 1, \\
    &f(O,M) = 0.5, &
    &f(M,O) = 0.3, & 
    &f(O,O) = 0.8, \\
    &f(O,Y) = 0,&
    &f(Y,O) = 0,&
    &f(Y,M) = 0.
  \end{align*}
  The meaning is that age group $G_1$ influences age group $G_2$ with
  a coefficient $f(G_1, G_2)$ for the persuadables, and with a
  coefficient $0.5 \cdot f(G_1, G_2)$ for the stubborn.  For example,
  persuadable young people have relatively low self-esteem and weight
  the opinion of other young people as $1.2$ higher than their own;
  they weigh the opinion of middle-aged people as $0.8$ of their own,
  and they completely disregard the opinions of old people.
	
  Denote by $G(t)$ the age group of type $t$, let $S(t)=1$ if $t$ is
  persuadable and $S(t) = 0.5$ otherwise, and let $k \in \N$ be the
  number of rounds of diffusion to simulate, perhaps as an estimate of
  how much diffusion happens before a (global) election takes place.
  We assume that one type is influenced by other types to a degree
  that exponentially decreases with the swap distance of their
  preference orders, and also that it decreases inversely as the
  election draws closer, perhaps because voters become skeptical and
  more rigid in their opinions.  The coefficient of influence of voter
  type $t$ on voter type $t'$ in round $\ell \in [k]$ is computed as
  follows:
  \[
	c_{t, t'}^\ell = 
	\begin{cases}
	\nicefrac{1}{\ell} \cdot \nicefrac{1}{2^{d}} \cdot S(t') \cdot f(G(t), G(t')) & \text{if } t' \neq t\ , \\
	1 \enspace & \text{if } t=t'\ .
	\end{cases}
  \]
  where $d$ is the swap distance between the preference orders of
  voters of types $t$ and $t'$.

  Then, the diffusion process is synchronous and in each step, each
  voter type $t'$ holds a ``local election'' defined as follows. There
  are $c(t,t') \cdot w_t$ many voters of type $t$ (for all types
  $t \in [\tau]$), and the voting rule is Borda (note that for Borda,
  the definition is sensible even if the number of voters as defined
  is fractional).  After this election, we obtain some winning ranking
  $r$ (breaking ties according to the ordering of candidates as
  $c_1, \dots, c_4$), and all voters of type $t'$ move to a type with
  ranking $r$ and the same age group and stubbornness level.  The
  final vote is evaluated with the Plurality rule.
	
  Now we wish to apply Theorem~\ref{thm:bsg-generalization} and for
  that we need to show that $\calR$ and the diffusion process are
  ILP-expressible.  Since $\calR=$ Plurality, we focus on the
  diffusion process.  Let $\text{rank}(i,t)$ be the rank of candidate
  $i$ in the preference order of type $t$.  We define the score of
  candidate $i$ in the local election of type $t$ in round $\ell$ as
  \[
	s_{i,t}^\ell = \sum_{t' \in [\tau]} c_{t',t}^\ell (4- \text{rank}(i,t')) x_{t'}^{\ell-1} \enspace .
  \]
  Recall that we use the Iverson bracket notation, i.e., $[F]$
  evaluates either to $1$ or to $0$, depending on the truth value of
  condition $F$.  For each two $i, i' \in [4]$, $i \neq i'$, define:
  \[
	i \triangleleft^\ell_t i' \equiv \left[\left(s_{i,t}^\ell > s_{i',t}^\ell\right) \vee \left( (s_{i,t}^\ell = s_{i',t}^\ell) \wedge i > i'\right) \right] \enspace .
  \]
	
  Definition~\ref{def:ilp expressible diff} requires that, for each
  two types $t,t'$ and each round $\ell \in [k]$, there is a linear
  system that is satisfied exactly if voters of type $t$ should be
  assimilated by type $t'$.  Let $t'$ have a preference order
  $c_{i_1} \preceq c_{i_2} \preceq c_{i_3} \preceq c_{i_4}$ and let
  $G(t) = G(t')$ and $S(t) = S(t')$.  Then $t$ should be assimilated
  into $t'$ exactly if
  \[(i_1 \triangleleft^\ell_t i_2) \wedge (i_2 \triangleleft^\ell_t i_3) \wedge (i_3 \triangleleft^\ell_t i_4)\ ,
  \]
  which is a boolean combination of linear inequalities and can be
  rewritten into the format required by Definition~\ref{def:ilp
    expressible diff} using the same standard tricks used in the proof
  of Theorem~\ref{theorem:ilp one} (see~\cite[Section
  7.4]{ilptricks}).  This shows that the diffusion process is
  ILP-expressible.  Hence, if the above were generalized to $m$
  candidates, $\alpha$ age groups, and $\sigma$ stubornness levels,
  then $\calR$-BSG would be fixed-parameter tractable with respect to
  $m+\alpha+\sigma+k$ for any ILP-expressible voting rule $\calR$.
\end{example}

\section{Experiments}\label{section:experiments}

In addition to our theoretical results, we also evaluated our
ILP-based algorithm experimentally. Unfortunately, it turned out that, while it can produce results for up to four candidates in a reasonable
amount of time, going beyond this number is not practical. Thus, we
sought heuristic algorithms instead. In particular, we designed one deterministic heuristic, a greedy algorithm, and one heuristic based on simulated annealing. Unfortunately,
our other attempts were not very successful either---our heuristics often produced
much more costly bribery strategies than the (optimal) ILP algorithm (even for
the cases of $3$ or $4$ candidates) and often required even more time
to complete (especially for the cases with more voters).  Two possible
explanations for these results are that:
\begin{enumerate}
\item Our $\calR$-BSG problem is a particularly hard combinatorial problem.
  If this is indeed the case, then it might be a good testbed for improved
  ILP algorithms.
\item Our heuristics are poorly designed and there is room for
  significant improvement and further research to obtain better ones. Indeed, we did not aim at optimizing heuristics but rather at gathering a basic feeling of the practical complexity of our problem.
\end{enumerate}

In either case, our results call for further research and further
analysis. Below we describe our experimental setup, the heuristics
that we have tried, and how they compare to the ILP-based algorithm.

\subsection{Experimental Setup}

We consider elections with either $3$ or $4$ candidates and either
$1000$ or $10000$ voters. In each case, we generate the voters'
preferences using the impartial culture model; i.e., by drawing the
preference order of each voter uniformly at random.  We used
the Borda voting rule and focused on the basic variant of our problem, where voter types are equivalent to the voters' preference orders, two
voter types are connected if their swap distance is one, and there is
unit cost for shifting the preferred candidate by one position up in a
single voter's preference order. We considered the synchronous
diffusion process only.
For the case of three candidates, we generated $55$
elections for each combination of a heuristic algorithm and the number of voters.
For the case of four candidates, we generated $10$ elections for
each heuristic algorithm. We ran the ILP-based algorithm for every
generated election, in each setting.




\subsection{Algorithms}

We tested three algorithms, namely our ILP-based algorithm---described in Section~\ref{sec:ilp-algo}---and two heuristics. The ILP-based algorithm
solves the optimization variant of $\calR$-BSG, i.e., it does not need
the budget to be part of its input; it simply finds the lowest cost of
ensuring that the preferred candidate is a winner. The heuristics, on
the other hand, are phrased as decision algorithms and, thus, need the
budget $b$ to be given. We convert them to optimization algorithms
using the standard approach of binary searching for the right value of
$b$.
Specifically, our binary-search method has two phases and, for a given
decision algorithm $\mathcal{H}$, proceeds as follows:
\begin{enumerate}
\item In the \emph{first phase}, we begin with $b = 1$; then, we use
  algorithm $\mathcal{H}$ to see whether this budget is sufficient to
  succeed. If this is the case, then we halt; otherwise, we
  multiply~$b$ by~$2$ and, again, use $\mathcal{H}$ to check whether
  this $b$ suffices.  If this is the case, then we halt; otherwise, we
  again multiply $b$ by~$2$ and repeat.  The first phase continues
  until we have some $b = 2^i$ for which $\mathcal{H}$ succeeds.
  
\item In the \emph{second phase}, we perform a binary search on the
  values of $b$ between $b=2^{i-1}$ and $b=2^i$; this allows us to
  find the minimum $b^*$ for which $\mathcal{H}$ succeeds; we return
  this $b^*$.
\end{enumerate}

\subsection{Heuristic Algorithms}

We design two heuristic algorithms. One is a simple greedy algorithm
whereas the other is an adaptation of the classic simulated annealing
approach. Both algorithms are using the same idea regarding evaluation
of partial solutions, based on the idea of \emph{margin of
  victory}. Given a (partial) solution for the problem, i.e., the
number of positions by which to shift the preferred candidate in the
preference orders of the voters, we evaluate the quality of this
solution as follows:
\begin{enumerate}
\item we implement the shifts as specified in the solution,
\item we run the diffusion process, and
\item we compute the difference between the Borda score of the
  preferred candidate and the Borda score of the highest-scoring
  opponent; this value is known as margin of victory and we interpret
  it as the quality of the solution.
\end{enumerate}
Note that if the solution indeed leads to the victory of the preferred
candidate then the margin of victory is non-negative. Either way, the
higher it is, the better (indeed, if it is positive then we want the
preferred candidate to have high advantage over the second-best
candidate; if it is negative, then we want the preferred candidate to,
nonetheless, be as close as possible to the current winner).

\subsubsection{The Greedy Heuristic}

Our greedy heuristic proceeds as follows: We maintain a solution, which
is a set of bribery operations, initialized to be the empty set. We
perform $b$ iterations, where in each iteration we go over all
possible bribery operations---one operation at a time---and select the
operation that, when added to the partial solution, increases its
quality the most.
We also experimented with several other variants of this heuristic,
but neither of them led to substantial changes or improvements in the
performance.






\begin{figure}
	\centering
		\includegraphics[width=12.0cm]{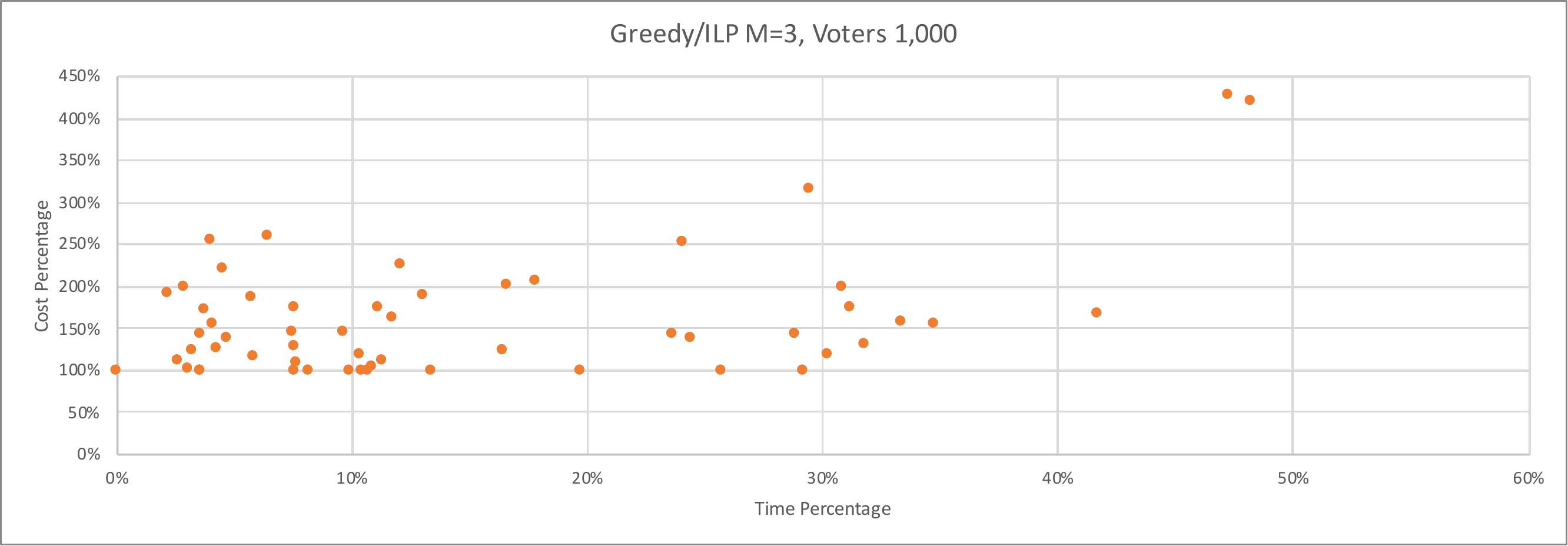} 
	\caption{The performance of the greedy heuristic against ILP with respect to run time and cost with 3 candidates and 1000 voters.}
	\label{m3_1000_zoom_a.pdf}
        \bigskip
        
	\centering
		\includegraphics[width=12.0cm]{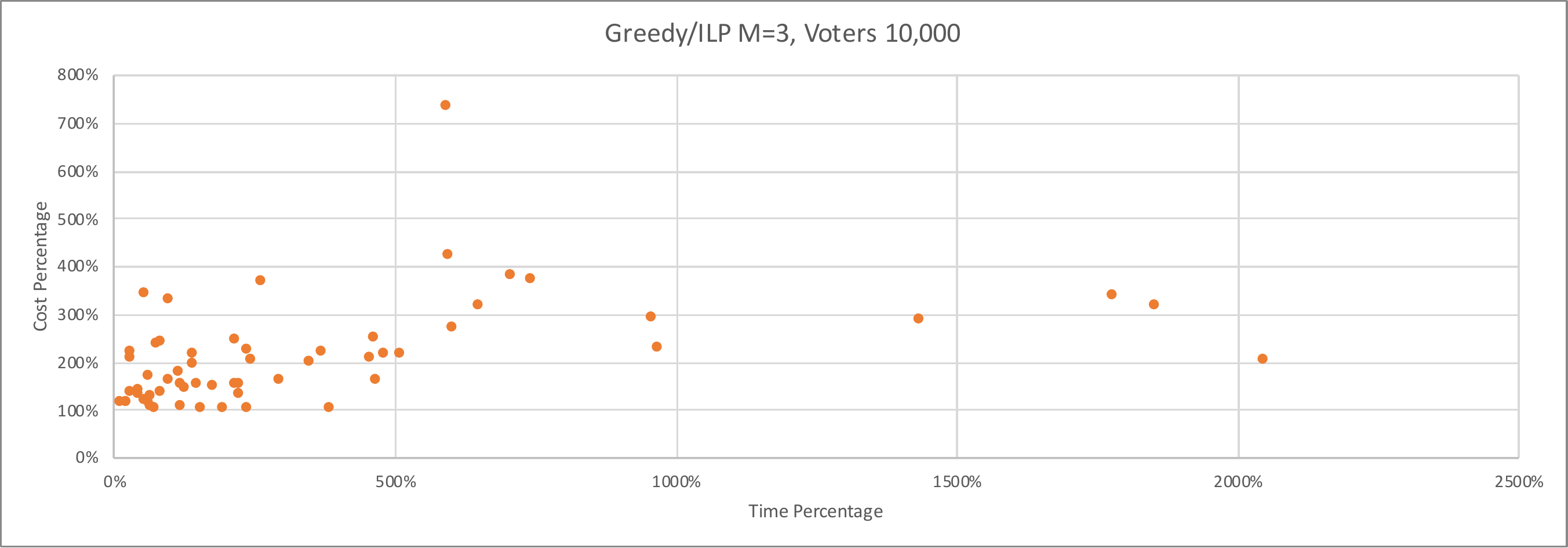} 
	\caption{The performance of the greedy heuristic against ILP with respect to run time and cost with 3 candidates and 10000 voters.}
	\label{m3_10000_zoom_a.pdf}
\end{figure}


\begin{figure}
	\centering
		\includegraphics[width=12.0cm]{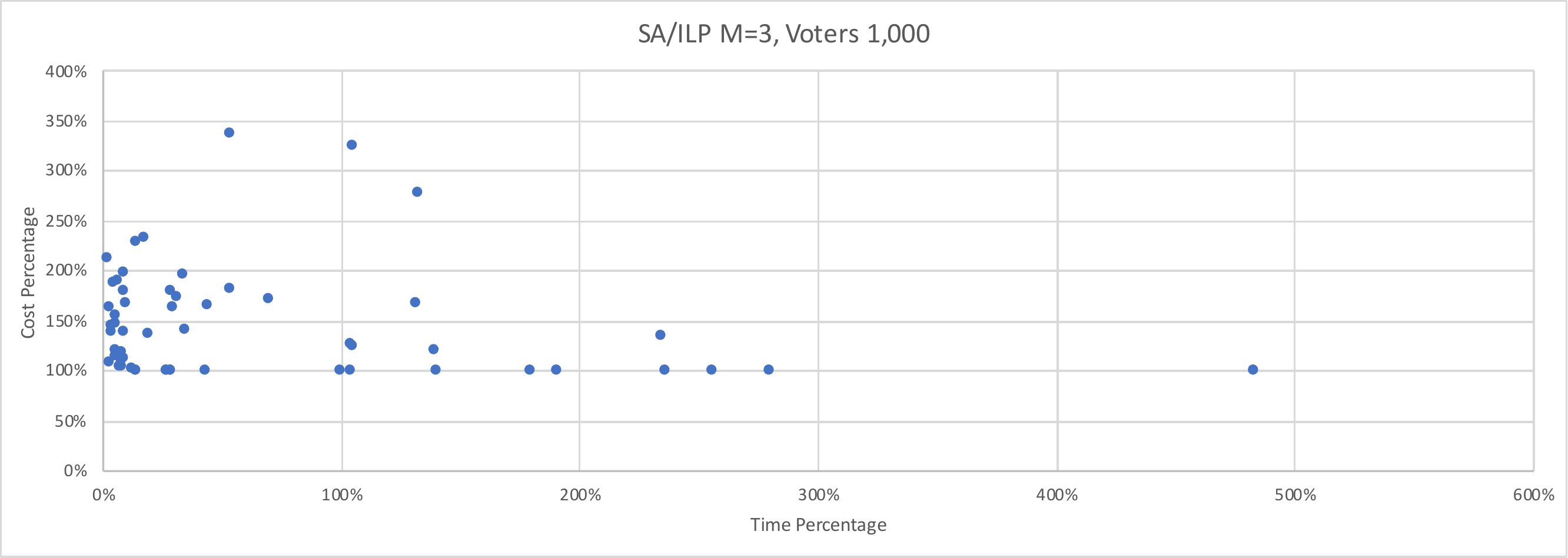} 
	\caption{The performance of the SA heuristic against ILP with respect to run time and cost with 3 candidates and 1000 voters.}
	\label{m3_1000_zoom_b.pdf}
   \bigskip
	\centering
		\includegraphics[width=12.0cm]{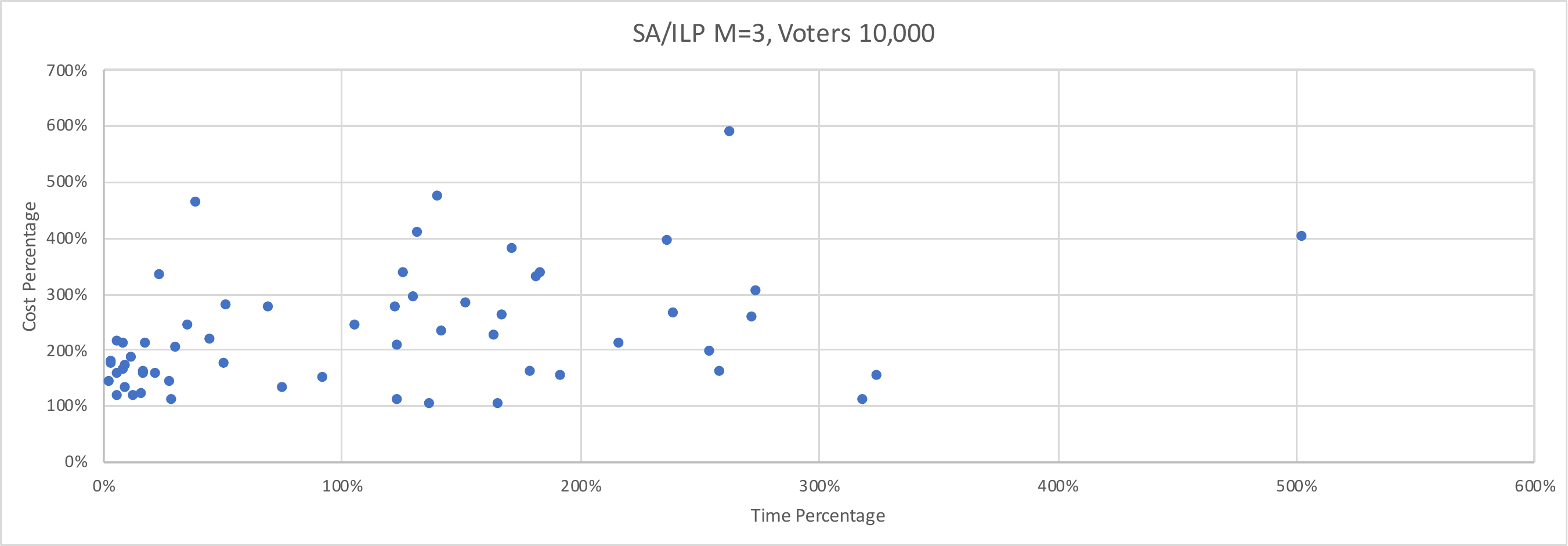} 
	\caption{The performance of the SA heuristic against ILP with respect to run time and cost with 3 candidates and 10000 voters.}
	\label{m3_10000_zoom_b.pdf}
\end{figure}




\begin{figure}
	\centering
		\includegraphics[width=12.0cm]{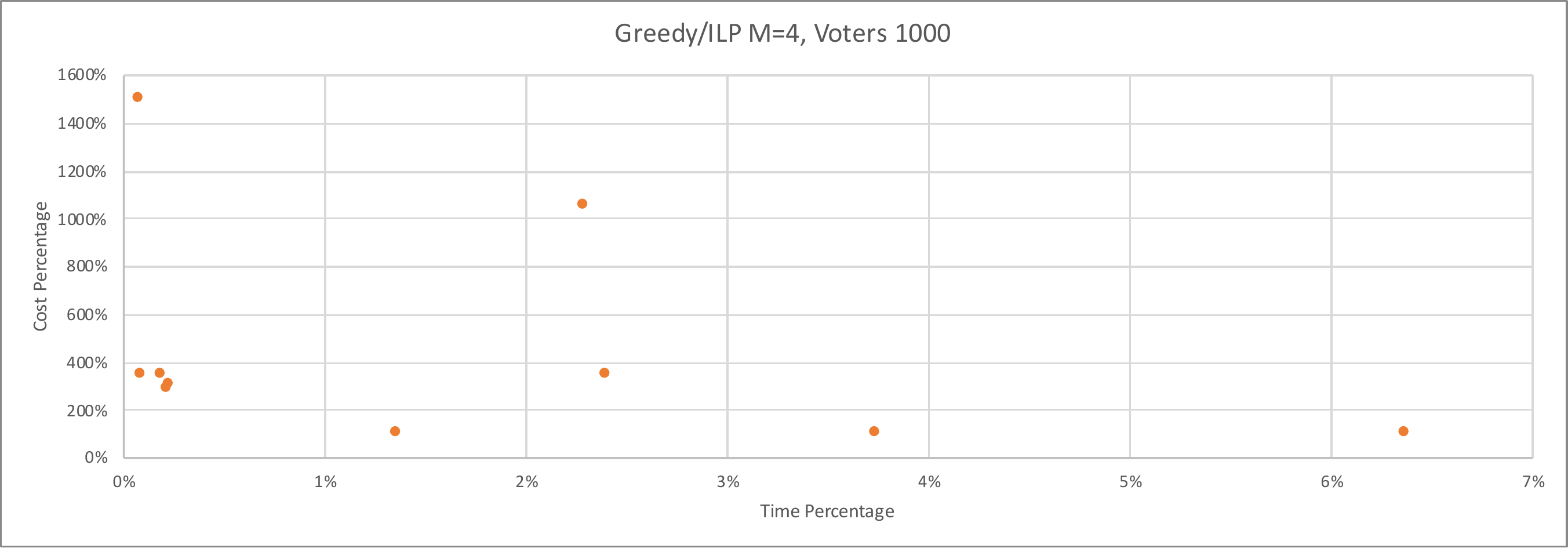} 
	\caption{The performance of the greedy heuristic against ILP with respect to run time and cost with 4 candidates and 1000 voters.}
	\label{m4_1000_zoom_a.pdf}
  \bigskip
	\centering
		\includegraphics[width=12.0cm]{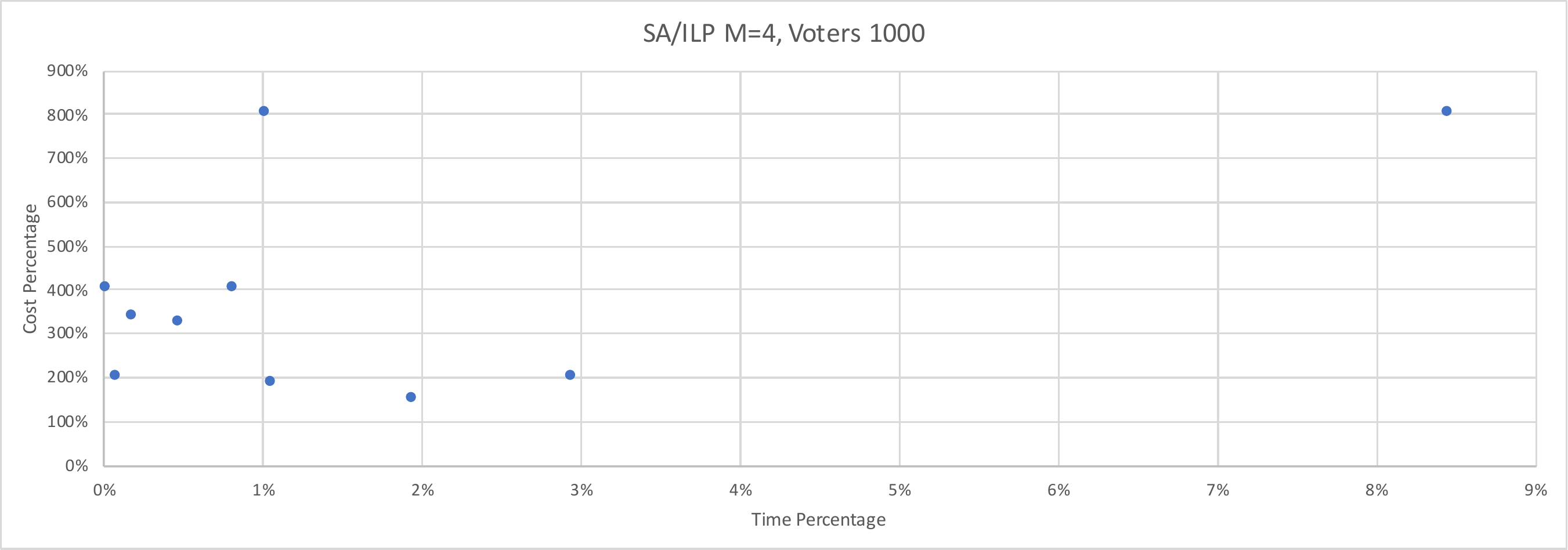} 
	\caption{The performance of the SA heuristic against ILP with respect to run time and cost with 4 candidates and 1000 voters.}
	\label{m4_1000_zoom_b.pdf}
\end{figure}


\subsubsection{The Simulated Annealing Heuristic}

In the simulated annealing heuristic (SA) we maintain a matrix $A$ with $m!$ rows and $m$ columns, which represents $b$ bribery operations, where each cell 
$a_{i,j}$ of the matrix represents the number
of voters from type $i$ for whom we shift the preferred candidate $p$
up by $j$ positions. Note that the cost of the set of bribery operations corresponding to some matrix $A$ is the summation over the costs given by the cells, where the corresponding
cost of cell $a_{i,j}$ is $a_{i,j} \cdot j$ (it costs $j$ to shift $p$
by $j$ positions); for convenience, for a given $A$, we refer to $\sum_{i,j} a_{i,j} \cdot j$ as ``the cost of $A$''.

\paragraph{Initial solution}
We initialize $A$ as follows:
  first, we set all cells of $A$ to be $0$. Then, we iterate until the cost of $A$ is $b$, where in each iteration:
  \begin{enumerate}
    \item 
    We choose $i$ and $j$ uniformly at random ($i$ ranges from $0$ to $m!$, while $j$ ranges from $0$ to $m$). 
    \item 
    If $a_{i,j}$ is $0$, then we select a value $v$ uniformly at random between $0$ and the number of voters of type $i$. 
    \item 
    If the cost of $A$ plus $v \cdot j$ is not greater than $b$, then we set $a_{i,j}$ to be $v$; otherwise, we leave $a_{i, j}$ to be $0$ and proceed to the next iteration.
  \end{enumerate}

\paragraph{Local improvements}
After initializing the matrix $A$ as described above, we proceed to the main algorithm, in which
we perform $T = 10000$ iterations, where in each iteration we aim at improving the current solution.
In particular, in each iteration we pick,
uniformly at random, two different voter types $i \neq k$ (note that we only
choose such voter types for which at least one voter exists) and two indices $j_1$ and $j_2$ such that $a_{k, j_1} > 0$, $a_{i, j_2} > 0$, and proceed as follows:
  (1) we decrease $a_{k,j_1}$ by one;
  (2) we increase $a_{k,j_1-1}$ by one;
  (3) we decrease $a_{i,j_2-1}$ by
one;
  and
  (4) we increase $a_{i,j_2}$ by one.
(Intuitively, (1) and (2) ``free'' one unit of budget as it corresponds to taking one voter of type $k$ that previously was shifted by $j_1$ positions to now be shifted by only $j_1 - 1$ positions; and, then, (3) and (4) ``use'' this one unit of budget as it corresponds to taking one voter of type $i$ that previously was shifted by $j_2 - 1$ positions to now be shifted by $j_2$ positions. Indeed, we choose such $j_1$ and $j_2$ for which this local step is feasible.

\paragraph{Acceptance probability}
Let $A$ be the current solution and let $A'$ be a modified solution according to the procedure described in the previous paragraph.
Next we describe under which conditions we ``accept'' the modification and replace $A$ with $A'$.
First, recall that the number of iterations is $T = 1000$; moreover, we have a parameter $p_0$, set to be $0.2$, and we maintain a parameter $p_1$, initially set to be $p_0$.

Now, if $A'$ has higher quality than $A$, then with probability $1-p_1$ we accept it (i.e., set $A = A'$ and reiterate); while if $A'$ has lower quality than $A$, then with probability $p_1$ we accept it.
Lastly, we replace the value of $p_1$ with $p_1 - p_0 / T$. This ensures that after $T$ iterations $p_1$ drops down to $0$, at which point we halt.

\subsection{Results of the Experiments}

We present the results of our experiments regarding three candidates
in Figures~\ref{m3_1000_zoom_a.pdf} and \ref{m3_10000_zoom_a.pdf} (for
the greedy heuristic) and in Figures~\ref{m3_1000_zoom_b.pdf} and
\ref{m3_10000_zoom_b.pdf} (for the SA heuristic).
For the case of four candidates, we show our results in
Figures~\ref{m4_1000_zoom_a.pdf} and~\ref{m4_1000_zoom_b.pdf}.  For
the case of three candidates, we show results for 1000 and 10000
voters, whereas for the case of four candidates we give results for
1000 voters only.

For the case of three candidates and 1000 voters, the greedy heuristic
is significantly faster than the ILP-based algorithm, but for 10000
voters it works much more slowly. The SA heuristic tends to be slower
than the ILP-based algorithm irrespective of the number of voters.
However, for the case of four candidates both heuristics are much
faster than the ILP solution. Unfortunately, in each of the presented
cases the heuristics quite often return solutions of cost that is much
higher than the optimal one (indeed, for the case of the greedy
algorithm and four candidates, for one of the elections the heuristic
found a solution about 15 times more expensive than the optimal one).
More commonly, the heuristics provide solutions that are up to 3-4
times more expensive than the optimal ones. All in all, this means
that the heuristics can hardly be seen as feasible means of solving
our problem.

Nonetheless, it is interesting to also compare the running times of
the heuristics depending on the number of voters. For the greedy heuristic,
in Figure~\ref{m3_1000_zoom_a.pdf}, we see that for $1000$ voters the
heuristic completes in at most half of the time of the ILP-based
algorithm,  even for the most difficult instances. Yet, for the
case of $10000$ voters (Figure~\ref{m3_10000_zoom_a.pdf}) we already
see that it can be up to 20 times slower than ILP (although for most
instances the heuristic is at most five times slower). This is natural
as the ILP-based solution scales logarithmically with the number of
voters, whereas the greedy heuristic scales polynomially. On the other
hand, for SA we do not see such effect as the number of iterations is
fixed and its running time depends logarithmically on the number of
voters (compare Figures~\ref{m3_1000_zoom_b.pdf},
\ref{m3_10000_zoom_b.pdf}).




From the experiments we conclude that the problem of computing the
cheapest campaign for rigging a given election by influencing a
society graph either is a challenging problem, or our heuristics are
poorly chosen or are poorly optimized. While we were able to prove, using certain ILP techniques, that the problem is FPT with respect to the number
of candidates, there is room for future work, both theoretical and
regarding algorithm engineering, to design algorithms that perform
well in practice.
Indeed, in our experiments we used standard local search algorithms
without significant optimization, and our ILP implementation is using
the standard setting of an off-the-shelf ILP solver (Gurobi 7.5). All experiments were run on two Intel Xeon Gold 6230 - 20 Cores 125W 2.1GHz CPU Processor and 192 GB RAM. 

Let us explain one possible direction that seems especially viable to us.
Intuitively, the reason our ILP formulations are difficult to solve is that we are using a geometric tool (ILP) to express complex logical constraints.
Recently, an alternative method for expressing ILP-definable voting rules and diffusion processes was shown via Presburger Arithmetic~\cite{KouteckyT:2020}.
There exist Presburger Arithmetic solvers such as Omega~\cite{kelly1996omega} and TAPAS~\cite{leroux2009tapas}, and we hope that using them to solve $\mathcal{R}$-BSG would yield interesting results.

\section{Outlook}\label{section:outlook}

We described a powerful model capturing various scenarios of opinion
diffusion in networks, under various manipulative actions. By
considering voter types and society graphs, we were able to provide quite strong tractability results.  In particular, we have shown
that, under certain circumstances, it is possible to find an optimal
bribery scheme, taking into account various diffusion processes
operating on various society graphs, for very general models.
Below we discuss several research directions following from our work.

\begin{description}
\item[ILP Techniques.] We do hope that this paper will have the
  side-effect of popularizing a number of ILP techniques within the
  area of computational social choice. While using Lenstra's algorithm
  to obtain FPT algorithm is already well-known, we have used a number
  of tricks that allow expressing more involved constraints than
  typically found in the ILPs used in this area (even though they are,
  generally, well-known in the area of mathematical programming).  We
  hope that promoting these techniques would lead to discovering
  further voting-related algorithms and further transfer of knowledge
  between mathematical programming and computational social choice.

\item[Generalized Diffusion Processes.] We believe that our concept of
  a \emph{generalized diffusion process} deserves more study. Here we
  mainly cared for identifying whether such generalized diffusion
  processes can be efficiently encoded via linear constraints, but
  studying their further properties, such as finding sufficient
  conditions for convergence, is an intriguing research direction. In
  particular, it would be interesting to explore connections between
  generalized diffusion processes and iterative voting (for more
  details on iterative voting, see the work of Meir et
  al.~\cite{iterative-voting} and many papers that followed up on its
  ideas, in particular those of Sina et al.~\cite{sina2015adapting}
  and Tsang and Larson~\cite{manipulation-social-network}). Under
  iterative voting, all the voters observe the votes currently cast by
  all the other voters and, in each round, they can modify their votes
  to obtain a more desirable outcomes. Generalized diffusion processes
  are capable of encoding such dynamics and, additionally, can impose
  restrictions on which votes the voters see (e.g., according to a
  given social network).

\item[Probabilistic Models.] Our model is inherently
  deterministic. While in the introduction we mentioned that, on the
  one hand, such determinism is quite common and, on the other hand,
  there are workarounds to simulate stochastic bahavior, it would be
  quite interesting to build a stochastic ingredient into the model
  in a way that does not require workarounds. Doing so deserves a careful study.

\end{description}

Our work is primarily theoretical, but we have also included an
experimental component. So far, our conclusion from these experiments
is that our bribery problem on society graphs is either quite
challenging to solve or our approaches to solve it are too naive. Thus it is natural to seek better algorithms and to perform experiments on more realistic data, including data coming from real-life settings.

\section*{Acknowledgments}
Piotr Faliszewski was partially supported by the funds of Polish
Ministry of Science and Higher Education assigned to the AGH
University of Science and Technology.  Martin Koutecký was supported
by a Technion postdoctoral fellowship funded by the Israel Science Foundation grant 308/18, by Charles University project UNCE/SCI/004 and by the project 19-27871X of GA ČR.
Nimrod Talmon was supported by the Israel Science Foundation (ISF;
Grant No. 630/19).  We are very grateful to the IJCAI reviewers for
their useful feedback on the early version of this paper.

\ifarxiv
\bibliographystyle{plain}
\else
\bibliographystyle{named}
\fi
\bibliography{bib}

\end{document}